\documentclass[journal, twocolumn]{IEEEtran}

\usepackage{caption}

\usepackage{cite}
\usepackage{gensymb}

\usepackage{amsmath,amsthm,amssymb,amsfonts}
\usepackage{algorithmic}
\usepackage{graphicx}
\usepackage{textcomp}
\usepackage{xcolor}
\usepackage{float}
\usepackage{color}

\usepackage{graphicx}
\usepackage[lined,boxed,commentsnumbered, ruled]{algorithm2e}
\usepackage{mathrsfs}
\usepackage{bm}
\usepackage{pgfplots}
\usepackage{tikz}
\usetikzlibrary{arrows}
\usepackage{subfigure}
\usepackage{graphicx,booktabs,multirow}
\usepackage{upgreek}

\definecolor{colorhkust}{RGB}{20,43,140}
\definecolor{colortsinghua}{RGB}{116,52,129}
\definecolor{color1}{RGB}{128,0,0}

\newtheorem{lemma}{Lemma}
\newtheorem{theorem}{Theorem}
\newtheorem{corollary}[theorem]{Corollary}
\newtheorem{proposition}[theorem]{Proposition}
\newtheorem{definition}{Definition}




\newcommand{\rank}{\mathrm{rank}}
\newcommand{\diagg}{\rm{diag}}




\newcommand{\R}{\mathbb{R}}  
\newcommand{\N}{\mathbb{N}}  
\newcommand{\C}{\mathbb{C}} 

\DeclareMathOperator*{\argmin}{argmin}

\mathchardef\re="023C
\mathchardef\im="023D

\begin{document}

\title{Large-Scale Beamforming for Massive MIMO \\
via Randomized Sketching}

%

\author{Hayoung~Choi, \IEEEmembership{Member,~IEEE}, 
Tao~Jiang, \IEEEmembership{Student Member,~IEEE},
Yuanming~Shi~\IEEEmembership{Member,~IEEE},
Xuan~Liu~\IEEEmembership{Member,~IEEE},
                     and Khaled B. Letaief~\IEEEmembership{Fellow,~IEEE} %
\thanks{H. Choi, T. Jiang, and Y. Shi are with School of Information Science and Technology, ShanghaiTech University, Shanghai 201210, China (e-mail:\{hchoi,jiangtao1,shiym\}@shanghaitech.edu.cn).}
\thanks{X. Liu is with the School of Electrical Engineering and Telecom- munications, The University of New South Wales, Australia (e-mail: xuan.liu@unsw.edu.au).}
\thanks{K. B. Letaief is with the Department of Electronic and Computer Engi- neering, The Hong Kong University of Science and Technology, Hong Kong (e-mail: eekhaled@ust.hk).}

\thanks{Part of this work was presented at the IEEE Global Communications Conference, Waikoloa, HI, USA, December 2019 \cite{YM_GC19}.}}


\maketitle

\begin{abstract}
Massive MIMO system  yields significant improvements in spectral and energy efficiency for future wireless communication systems. 
The regularized zero-forcing (RZF) beamforming  is able to provide good performance with the capability of achieving  numerical stability and robustness to the channel uncertainty. 
However, in massive MIMO systems, the matrix inversion operation in RZF beamforming becomes computationally expensive. 
To address this computational issue, we shall propose a novel \emph{randomized sketching} based RZF beamforming approach with low computational complexity. This  is achieved by  solving a linear system via  {randomized sketching} based on  the preconditioned Richard iteration, which guarantees high quality approximations to the optimal solution. We theoretically prove that the sequence of approximations obtained iteratively converges to the exact RZF beamforming matrix linearly fast as the number of iterations increases. Also, it turns out that the system sum-rate for such sequence of approximations converges to the exact one at a linear convergence rate. Our simulation results verify our theoretical findings.
\end{abstract}

\begin{IEEEkeywords}
Regularized zero-forcing beamforming, massive MIMO, randomized sketching algorithm, sketching method.
\end{IEEEkeywords}

\IEEEpeerreviewmaketitle

\section{Introduction}
\IEEEPARstart{W}{ith} the explosive growth in mobile data traffic and number of mobile devices,  as well as the stringent and diverse demands of intelligent mobile services, wireless networks are facing formidable challenges to enable high spectral efficiency and support massive connectivity with low-latency.  To satisfy these requirements, network  densification becomes the key enabling technology. This is achieved by deploying more base stations (BSs) with the storage and computational capabilities, yielding an ultra-dense network (UDN) \cite{shi2017generalized}. In particular, massive multiple-input multiple-output (MIMO)  technique provides an alternative to achieve UDN by simply increasing the number of antennas at the existing BS \cite{rusek2013scaling,DBLP:journals/corr/abs-1902-07678}. The key success is based on the fact that deploying large-scale antenna arrays allows for an exceptional array gain and unprecedented spatial resolution such that the wireless communication system is robust to inter-user interference \cite{lu2014overview}.
Furthermore, the large arrays regime provides the opportunities for asymptotic system analysis, e.g., the high-dimensional random matrix theory can provide deterministic approximations for achievable data rates \cite{hoydis2013massive, Couillent_TIT12}.

Transmit beamforming at the BSs is a key method to optimize the network utility function (e.g., sum-rate) in terms of signal-to-interference-plus-noise ratios (SINRs). However, the resulting beamforming optimization problem is generally very difficult to be solved due to the nonconvexity and high-dimensionality.  
With the known optimal SINRs parameters for maximizing the network utility function, a simple structure for the optimal beamforming can be derived based on the Lagrange duality theory \cite{Bjornson_SPM14}. To find the optimal SINRs parameters, we normally need to solve a sequence of convex subproblems \cite{bjornson2013optimal}. For instance, in the max-min fairness rate optimization problem,
the optimal SINRs   parameters can be found via the bi-section method \cite{zakhour2012base}, wherein a  sequence of convex subproblems are solved.  Although the  general large-scale convex optimization problem can be  solved by the operate splitting method,  it still needs to solve a sequence of subspace projection and cone projection problems in the transformed high-dimensional space for the standard cone program \cite{shi2015large}.  
Instead, the heuristic transmit beamforming, i.e., regularized zero-forcing (RZF) beamforming \cite{peel2005vector} turns out to be the  appealing choice since it performs closely to the optimal beamforming  in terms of sum-rate and has relatively low computational complexity \cite{Bjornson_SPM14}. We thus focus on investigating  RZF beamforming in this paper. 
However, the RZF beamforming needs to compute a matrix inversion with complexity proportional to $MK^2 $, where $ K $ is the number of users served by  $ M $ transmit antennas. This is however computational expensive in massive MIMO scenario where $M \gg K\gg 1$. To tackle this issue, \cite{mueller2016linear,8602444} proposed to replace the matrix inversion in RZF beamformer by a truncated polynomial expansion, but it is not clear that which degree of the polynomial is needed to guarantee the good performance for the system sum-rate.

In recent years,
randomized sketching algorithms\cite{Tropp11,TCS-060,Drineas:2016:RRN:2942427.2842602} have received a great deal of attention 
in order to solve
large-scale matrix computation problems. 
The main idea behind randomized sketching algorithms is to compress a given large-scale matrix to a much smaller matrix by multiplying it by a \emph{random matrix} with certain properties. Very expensive computation can then be operated by the smaller matrix efficiently. In particular, several novel randomized algorithms are proposed for the ridge regression problem \cite{JMLR:v18:17-313,MR3662446, chowdhury2018iterative,pmlr-v80-lin18b}. Inspired by these progresses, we propose a randomized sketching based beamforming method to overcome the computational issues for designing beamformers in massive MIMO systems. Specifically, the randomized sketching RZF beamforming matrix is achieved by solving a linear system by preconditioned Richard iteration \cite{precondition} with the randomized sketching techniques \cite{chowdhury2018iterative}. The proposed randomized sketching RZF beamforming method  has a computational complexity proportional to  $ LK^2 $ with $ L\ll 2M $ as the sketching matrix size. We prove that the beamforming matrix obtained iteratively converges to the RZF beamforming matrix  at a linear convergence rate. Furthermore, we prove that the achievable system sum-rate of the MIMO system with the proposed randomized method converges to the achievable sum-rate given by RZF beamforming linearly as the number of iteration increases. Extensive numerical results are demonstrated to verify our theoretical findings.

\subsection{Outline}
The organization of this paper is as follows. In Sec.~\ref{section2}, the system model and problem statement of estimating the beamforming matrix for a massive MIMO communication system are described.
In Sec.~\ref{randombf}, we propose the randomized sketching method to approximate the beamforming matrix with low-complexity and provide convergence analysis and complexity analysis.
In Sec.~\ref{sumrateprove}, we prove that the system sum-rate of the randomized sketching based beamformer converges to the sum-rate of the RZF beamforming matrix as the number of iterations increases. 
We provide the exact rate of convergence as well.
In Sec.~\ref{sec:simulations}, we numerically evaluate the performance of the randomized sketching based beamforming method. Finally, conclusions are drawn in Sec.~\ref{sec:conclusion}.

\subsection{Notation}
Let $\R$ (resp. $\C$) be the set of real (resp. complex) numbers. 
For a matrix $\bm{A}$,
$\bm{A}_{*i}$ (resp. $\bm{A}_{i*}$) denotes $i$-th column(resp. row) vector of $\bm{A}$.
$\|\bm{A}\|_2$ (resp. $\|\bm A\|_F$) denotes the operator (resp. Frobenius) norm.
For a vector $\bm{x}$, $\| \bm{x} \|_2$ denotes the Euclidean norm.
The superscript $\sf T$ denotes the transpose operator.
$\bm{A}^{\sf H} = \bar{\bm{A}}^{\sf T}$ is a complex conjugate transpose of $\bm{A}$.
The diagonal matrix whose diagonal entries consist of entries of a vector $\bm \lambda$ is denoted by $\diagg \{\bm \lambda\}$.
Denote the identity matrix of size $K$ as $\bm{I}_K$. When the size can be trivially determined by the context, we simply write $\bm{I}$. Denote the zero matrix with size $K \times M$ as $\bm{0}_{K\times M}$. 
Let $\re(\bm{A})$ and $\im(\bm{A})$ denote the real and imaginary parts of a matrix $\bm{A}$, respectively.
For a matrix $\bm{Q} \in \R^{2K\times 2M}$ with $M \geq K$ of rank $2K$, its (thin) Singular Value Decomposition (SVD) 
is the form $\bm{U} \bm{\Sigma} \bm{V}^{\sf T}$ where $\bm{U} \in \R^{2K\times 2K}$ is the matrix of the left singular vectors, $\bm{V} \in \R^{2M\times 2K}$ is the matrix of the right singular vectors, and $\bm{\Sigma} \in \R^{2K\times 2K}$ is a diagonal matrix whose diagonal entries are the singular value of $\bm{Q}$. We denote the singular values of a matrix as $\sigma_i$.
We denote the matrix of the top $j$ left singular vectors as $\bm{U}_j \in  \R^{2K\times j}$ and
the matrix of the bottom $2K-j$ left singular vectors as $\bm{U}_{j,\perp} \in  \R^{2K\times (2K-j)}$.

\section{System Model and Problem Statement}
\label{section2}

\subsection{System Model}
We consider  a  single-cell massive MIMO  system consisting  of one BS equipped with $ M $ antennas  and $ K $ single-antenna users, where $ M \geq K $.  During the downlink
transmission, the received signal at the $ k $-th user is given by
\setlength\arraycolsep{2pt}
\begin{equation}\label{eq:sysmod1}
y_k = \bm h_k^{\sf H}\left(\sum_{i=1}^K\bm{w}_is_i\right)+n_k,\quad k=1,\ldots,K,
\end{equation}
where $\bm{w}_i\in\mathbb{C}^M$ is the transmit beamforming vector from the BS for data symbol $s_i$ to user $i$, $ \bm h_k\in\mathbb{C}^M $  is the channel propagation coefficients from the  BS to the $ k $-th user, and $ n_k\sim\mathcal{CN}(0,\sigma^2)$ is the additive noise (i.e., $n_k$ is
a circularly symmetric complex Gaussian random distribution with  mean $0$ and variance $\sigma^2$). Therefore, the
SINR at the $ k $-th user is given as
\begin{equation}\label{eq:SINR}
{\sf{SINR}}_k (\bm{W}) := \frac{|\bm h_k^{\sf H}\bm w_k|^2}{\sum_{j\ne k}|\bm h_k^{\sf
H}\bm w_j|^2+\sigma^2},
\end{equation}
where $ \bm W = [\bm{w}_1, \cdots, \bm{w}_K]\in\mathbb{C}^{M\times K}$ is
the aggregative beamforming matrix with the total transmit power limited by $P>0$, i.e.,
 
\begin{equation}\label{eq:power}
\|\bm W\|_F^2=\sum_{k=1}^{K} \|\bm w_k\|_2^2 \le P .
\end{equation}
The achievable system  sum-rate $ R(\bm W) $ is thus given by 
\begin{equation}\label{eq:R}
R(\bm W):= \sum_{k=1}^{K}\log(1+{\sf{SINR}}_k (\bm{W})).
\end{equation}
One of  the main goal of  transmit beamforming is to maximize the achievable system sum-rate. However, it is generally   computationally demanding to find the optimal beamforming matrix $ \bm W $ \cite{bjornson2013optimal}.  

\subsection{Regularized Zero-Forcing Beamforming}


Although there are various precoding techniques such as matched filter, zero forcing, regularized zero-forcing, truncated polynomial expansion, and phased zero forcing \cite{8169014}, this article considers the suboptimal  beamforming approach, \emph{regularized zero-forcing (RZF) beamforming} \cite{peel2005vector}, which is known to have the capability achieving robustness and numerical stability to the channel uncertainty \cite{peel2005vector, Bjornson_SPM14}. 
RZF precoder has been considered as the state-of-the-art linear precoder for MIMO wireless communication systems. 
Since we focus on the computational issues of RZF, we consider the following RZF with equal power allocation for simplification \cite{hoydis2013massive}
\begin{eqnarray}\label{eq:bfd}
\label{rzfbf}
\bm{W}^{\ast}&=&\beta\left(\bm{I}_M+{\gamma\over{\sigma^2}}\bm{H}^{\sf{H}}\bm{H}\right)^{-1}\bm{H}^{\sf{H}}\nonumber\\
&=&\beta\bm{H}^{\sf{H}} \left(\bm{I}_K+{\gamma\over{\sigma^2}}\bm{H}\bm{H}^{\sf{H}}\right)^{-1},
\end{eqnarray} 
where
$ \bm{H} = [\bm{h}_1, \cdots, \bm{h}_K]^{\sf H} \in\mathbb{C}^{K\times M}$ is the channel matrix, 
 $ \gamma>0 $ is an optimal regularizer, and $ \beta >0$ is a normalization parameter to satisfy the power constraint  \eqref{eq:power}. In particular, $\gamma$ can be derived as $ \gamma = P/K $ in the symmetric scenario, where the channels are equally strong\cite{peel2005vector}.
 
 \subsection{Complexity Issues in Massive MIMO}
 The main computational complexity for computing \eqref{eq:bfd} lies in computing the matrix inversion directly, which 
leads  $\mathcal{O}(MK^2+K^3)$ computational complexity. To support ultra-low latency communications in massive MIMO systems, it becomes critical to design large-scale precoding algorithm with low computation complexity \cite{mueller2016linear, 8602444}. As fast inversions of large-scale matrices in every coherence period needs to be performed, it is desired to find efficient algorithms to 
reduce the high
computational complexity with  performance guarantees.


In this paper, we shall develop the randomized sketching based precoding algorithm 
to compute the large-scale RZF beamforming matrix $\bm{W}^{\ast}$ in \eqref{rzfbf}. This is based on the key observation that the large-scale array regime, i.e., $M\gg K$, offers the opportunity for dimension reduction in \eqref{rzfbf}, thereby reducing the computational complexity while guaranteeing the high performance accuracy. Specifically, we develop the scalable algorithm for computing $\bm{W}^{\ast}$ in \eqref{rzfbf} based on the principles of \emph{Randomized Numerical Linear Algebra} \cite{Drineas:2016:RRN:2942427.2842602}. In particular, the theoretical guarantees for the achievable system sum-rate \eqref{eq:R} using the randomized sketching based beamforming method will be presented in Sec.~\ref{sumrateprove}.

\section{Randomized Sketching for Large-Scale Beamforming}
\label{randombf}

\subsection{Randomized Sketching Algorithm}
Randomized sketching algorithm exploits \emph{randomization} as a computational resource to develop improved algorithms for large-scale matrix computation problems. The key idea of randomized algorithm is to compress a given large-scale matrix to a much smaller matrix by multiplying it by a \emph{random matrix} with certain properties. Very expensive computation can then be performed on the smaller matrix efficiently. 
For a given matrix $\bm{A}$ and a random matrix $\bm{S}$, the technique of replacing $\bm{A}$ by $\bm{S} \bm{A}$ is known as a \emph{sketching technique} and $\bm{S} \bm{A}$ is referred to as a \emph{sketch} of $\bm{A}$. 
Such $\bm{S}$ is called a \emph{sketching matrix}.

Sketching technique can be accomplished by \emph{random sampling} or \emph{random projection}.
For random sampling method, the sketch consists of a small number of carefully-sampled and rescaled columns/rows of matrix $\bm{A}$.
On the other hand, for random projection method, the sketch consists of a small number of linear combinations of the columns/rows of $\bm{A}$.
We will discuss various construction for the random matrix $\bm{S}$
in Section \ref{subsec:sketching}.

Sketching technique has been extensively studied for a decade \cite{Tropp11,TCS-060,Drineas:2016:RRN:2942427.2842602}.
Recently, the widespread use of sketching as a tool for matrix computations yields many novel results in many fields, especially in machine learning \cite{JMLR:v18:17-313,MR3662446,Bhojanapalli2014CoherentMC,Bi:2013:EMC:3042817.3042982}.


\subsection{Randomized Sketching Based RZF Beamforming}
The first key observation is that \eqref{eq:bfd} can be expressed as the matrix ridge regression problem as follows \cite{chowdhury2018iterative}:
\begin{equation}\label{eq:rg}
\bm{W}^{\ast} = \underset{\bm W\in\mathbb{C}^{M\times K}}{\argmin}~\|\bm H\bm W- \lambda\beta\bm I_K\|_F^2+\lambda\|\bm W\|_F^2,
\end{equation} 
where $\lambda=\frac{\sigma^2}{\gamma}$. 
To facilitate algorithm design in real field, we  focus on solving the equivalent real counterpart of \eqref{eq:rg}:
 \begin{equation}\label{eq:real_main_formulation111}
\bm{M}^{\ast} = \underset{\bm M \in\R^{2M\times K} }{\argmin} ~ \|\bm{Q} \bm{M}-  \bm \Lambda\|_F^2+\lambda\|\bm{M} \|_F^2,
 \end{equation} 
where
\begin{align*}
\bm M=
\begin{bmatrix}
\re{(\bm{W})} \\
\im{ (\bm{W} )}
\end{bmatrix} ,
 \bm Q=
\begin{bmatrix}
\re{ (\bm{H}) } & - \im{ (\bm{H}) } \\
\im{ (\bm{H}) } & ~\re{ (\bm{H}) }
\end{bmatrix}, 
\bm \Lambda =
\begin{bmatrix}
\re{ (\lambda \beta \bm{I}_K)} \\
\im{  (\lambda\beta \bm{I}_K)}
\end{bmatrix} .
\end{align*} 
Note that since $\lambda, \beta>0$, $\im{  (\lambda\beta \bm{I}_K)} = \bm{0}$.
Then the optimal solution of (\ref{eq:real_main_formulation111}) takes the form,
\begin{equation}\label{eq:rzf_sol}
\bm M^{\ast} = \bm{Q}^{\sf{T}} (\bm{Q} \bm{Q}^{\sf{T}}+\lambda\bm I_{2K})^{-1}  \bm{\Lambda}.
\end{equation} 
Given the matrix $ \bm M^{\ast} $, it is trivial to obtain the complex RZF beamforming matrix $ \bm W^{\ast} $ in \eqref{eq:bfd}.

%

Iterative methods provide the solution to the linear system $\bm{A}\bm{x} = \bm{b}$ as the limit of a sequence $\bm{x}^{(j)}$, and usually involve matrix $\bm{A}$ only through multiplications by given vectors. 
Generally, any iterative method is based on a suitable splitting of the matrix $\bm{A}$ with $\bm{A} = \bm{E} - \bm{N}$, where $\bm{E}$ is nonsingular.
Then the sequence $\{ \bm{x}^{(j)} \}$ is generated as follows:
\begin{equation}
\bm{E} \bm{x}^{(j+1)} = \bm{N} \bm{x}^{(j)} + \bm{b} \quad \text{for all } j\in \N,
\end{equation}
where $\bm{x}^{(0)}$ is a given initial vector.
Equivalently, such iteration can be restated as 
$\bm{x}^{(j+1)} = \bm{x}^{(j)} + \bm{E}^{-1}\bm{r}^{(j)}  ~\text{for all } j\in \N$,
where $\bm{r}^{(j)}:= \bm{b} - \bm{A} \bm{x}^{(j)}$ is the residual at the step $j$, where $\bm{E}$ is called \emph{preconditioner} for $\bm{A}$.
The following iteration is called \emph{preconditioned Richardson iteration}\cite{precondition}:
\begin{equation}\label{eq:preconditioned_Richardson}
\bm{x}^{(j+1)} = \bm{x}^{(j)} + \alpha_j \bm{E}^{-1}\bm{r}^{(j)} \quad \text{for all } j\in \N,
\end{equation}
where $\alpha_j \neq 0$ is the real acceleration parameter.

We present the novel sketching based randomized beamforming in Algorithm  \ref{main:Algorithm}, which  iteratively computes a sequence of matrixes $\widetilde{\bm{M}}^{(j)} \in \mathbb{R}^{2M\times K}$ for $j= 1,\dots,t$ and returns the approximation $\widehat{\bm{M}}^{(t)} = \sum_{j=1}^t \widetilde{\bm{M}}^{(j)}$ to the true solution matrix of \eqref{eq:rzf_sol}.
In fact, it can be viewed as a preconditioned Richardson iteration.
Indeed, for a given $\bm{Y}^{(j)}$ in Algorithm \ref{main:Algorithm},
we denote $\widehat{\bm{Y}}^{(t)} = \sum_{j=1}^t \bm{Y}^{(j)}  $. Note that our solution is $\bm{M}^{(t)} = \bm{Q}^{\sf T} \widehat{\bm{Y}}^{(t)}$.
By (i) and (iii) in Algorithm \ref{main:Algorithm}, we have  
\begin{equation}\label{eq:recurrence}
\bm{\Lambda}^{(j)}
=  \bm{\Lambda}^{(j-1)}  - ( \bm{Q} \bm{Q}^{\sf T} + \lambda \bm{I} )  \bm{Y}^{(j-1)}. 
\end{equation}
Applying the recurrence relation \eqref{eq:recurrence} successively, it follows that 
\begin{align*}
\bm{\Lambda}^{(j)}
&= \bm{\Lambda}^{(j-2)}  - ( \bm{Q} \bm{Q}^{\sf T} + \lambda \bm{I} )  \bm{Y}^{(j-2)} -  ( \bm{Q} \bm{Q}^{\sf T} + \lambda \bm{I} ) \bm{Y}^{(j-1)}\\
&= \bm{\Lambda}^{(j-2)}  - ( \bm{Q} \bm{Q}^{\sf T} + \gamma \bm{I} )  (\bm{Y}^{(j-2)}  + \bm{Y}^{(j-1)})\\
& \vdots\\
& = \bm{\Lambda}^{(1)}  - ( \bm{Q} \bm{Q}^{\sf T} + \lambda \bm{I} )  (\bm{Y}^{(j-2)}  + \cdots +\bm{Y}^{(1)})\\
&= \bm{\Lambda}  - ( \bm{Q} \bm{Q}^{\sf T} + \lambda \bm{I} )  \widehat{\bm{Y}}^{(j-1)} .
\end{align*}
Then it holds that 
\begin{align*}
& \widehat{\bm{Y}}^{(t)}  =  \widehat{\bm{Y}}^{(t-1)} + \bm{Y}^{(t)} \\
&= \widehat{\bm{Y}}^{(t-1)} +(\bm{Q} \bm{S} \bm{S}^{\sf T} \bm{Q}^{\sf T} + \lambda \bm{I})^{-1} \bm{\Lambda}^{(t)}\\
& =  \widehat{\bm{Y}}^{(t-1)} +  (\bm{Q} \bm{S} \bm{S}^{\sf T} \bm{Q}^{\sf T} + \lambda \bm{I})^{-1}  (\bm{\Lambda}  - ( \bm{Q} \bm{Q}^{\sf T} + \lambda \bm{I} )  \widehat{\bm{Y}}^{(t-1)} ).
\end{align*}
Thus, Algorithm \ref{main:Algorithm} can be formulated as a preconditioned Richard iteration to solve the linear system
\begin{equation}
 (\bm{Q}\bm{Q}^{\sf T} + \lambda \bm{I}_{2K}) \bm{Y} = \bm{\Lambda},
\end{equation}
with preconditioner $\bm{E} =  (\bm{Q} \bm{S} \bm{S}^{\sf T} \bm{Q}^{\sf T} + \lambda \bm{I}_{2K})$ and $\alpha_j =1$ for all $j$ in  \eqref{eq:preconditioned_Richardson}.

%
%

\begin{algorithm}[t!]
\KwIn{$\bm{Q}\in \R^{2K \times 2M}$, $\bm{\Lambda} \in \R^{2K \times K }$, $\lambda >0$; number of iterations $t>0$;
sketching matrix $\bm{S} \in \mathbb{R}^{2M\times L}$; \\
{\bf 
Initialize: $\bm{\Lambda}^{(0)} \leftarrow \bm{\Lambda}$, $\widetilde{\bm{M}}^{(0)}  \leftarrow \bm{0}_{2M\times K}$, $\bm{Y}   \leftarrow  \bm{0}_{2K\times K}$};\\
{\bf for $j=1$ to $t$ do }\\
\begin{itemize}
\item[(i)] {\bf $\bm{\Lambda}^{(j)}  \leftarrow \bm{\Lambda}^{(j-1)} - \lambda \bm{Y}^{(j-1)} - \bm{Q}~ \widetilde{\bm{M}}^{(j-1)} $};

\item[(ii)] {\bf $\bm{Y}^{(j)}  \leftarrow  (\bm{Q} \bm{S} \bm{S}^{\sf T} \bm{Q}^{\sf T} + \lambda \bm{I}_{2K})^{-1} \bm{\Lambda}^{(j)}$};

\item[(iii)]
{\bf $\widetilde{\bm{M}}^{(j)}   \leftarrow \bm{Q}^{\sf T} \bm{Y}^{(j)}$};\\

\end{itemize}
{\bf end for}
}

\KwOut{Approximate solution matrix $\widehat{\bm{M}}^{(t)} = \sum_{j=1}^t  \widetilde{\bm{M}}^{(j)} $.
}
    \caption{Randomized Sketching Based Beamformer}\label{main:Algorithm}
\end{algorithm}


Algorithm~\ref{main:Algorithm} iteratively computes a sequence of matrices  
$\widetilde{\bm{M}}^{(j)}$ for $j=1,\ldots, t$ and returns the approximation
$\widehat{\bm{M}}^{(t)} = \sum_{j=1}^t  \widetilde{\bm{M}}^{(j)} $
to the true solution $\bm{M}^{\ast}$ in  \eqref{eq:rzf_sol}. 
Equivalently, it computes the approximation
$\widehat{\bm{W}}^{(t)} = \sum_{j=1}^t  \widetilde{\bm{W}}^{(j)} $
to the true solution $\bm{W}^{\ast}$ in  \eqref{eq:bfd}.
We call such approximation $\widehat{\bm{W}}^{(t)}$ \emph{a randomized sketching based beamformer}.

Algorithm \ref{main:Algorithm} uses the sketching matrix for the preconditioner in order to improve the rate of convergence and reduce the computational complexity.
Specifically, using the sketching matrix $\bm{S}\in \mathbb{R}^{2M\times L}$ with $L\ll 2M$, the preconditioner  $\bm{E} =  (\bm{Q} \bm{S} \bm{S}^{\sf T} \bm{Q}^{\sf T} + \lambda \bm{I}_{2K})$ can be computed by  matrix $\bm{Q} \bm{S}$ with much smaller size.

\subsection{Convergence Analysis}

The convergence analysis depends on the selected sketching matrix, which satisfies the constraint \eqref{eq:contraint1}.
Theorem {\ref{thm:main1}} presents a quality-of-approximation result under
the assumption that the sketching matrix  
satisfies the constraint \eqref{eq:contraint1}.

\begin{theorem}\label{thm:main1}
Assume that for some constant $0 < \varepsilon <1$, the sketching matrix $\bm{S} \in \R^{2M\times L}$ satisfies the following constraint
\begin{equation}\label{eq:contraint1}
\| \bm{V}^{\sf T} \bm{S} \bm{S}^{\sf T} \bm{V} - \bm{I}_{2K} \|_2 \leq \frac{\varepsilon}{2}, 
\end{equation}
where $\bm{V} \in \mathbb{R}^{2M\times 2K}$ is the matrix of right singular vectors of $ \bm{Q}$.
Then, after $t$ number of iterations, the approximation $\widehat{\bm{W}}^{(t)}$ returned by Algorithm \ref{main:Algorithm} satisfies
\begin{equation*}
\|\widehat{\bm{W}}^{(t)} - \bm{W}^*\|_F \leq \varepsilon^t \|\bm{W}^* \|_F,
\end{equation*}
where $\bm{W}^*$ is the true value of the RZF beamforming matrix in \eqref{eq:bfd} in the complex version.
\end{theorem}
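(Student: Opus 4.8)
The plan is to work entirely in the real formulation \eqref{eq:real_main_formulation111}: the real block construction $\bm{M}$ of $\bm{W}$ is norm-preserving, so $\|\bm{M}\|_F = \|\bm{W}\|_F$ and it suffices to prove $\|\widehat{\bm{M}}^{(t)} - \bm{M}^*\|_F \le \varepsilon^t \|\bm{M}^*\|_F$. As already established in the excerpt, Algorithm~\ref{main:Algorithm} is exactly the preconditioned Richardson iteration for $\bm{A}\bm{Y} = \bm{\Lambda}$ with $\bm{A} = \bm{Q}\bm{Q}^{\sf T} + \lambda\bm{I}$, preconditioner $\bm{E} = \bm{Q}\bm{S}\bm{S}^{\sf T}\bm{Q}^{\sf T} + \lambda\bm{I}$, and $\alpha_j = 1$. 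First I would write the standard error recursion: subtracting the exact solution $\bm{Y}^* = \bm{A}^{-1}\bm{\Lambda}$ from the residual form gives $\widehat{\bm{Y}}^{(t)} - \bm{Y}^* = \bm{G}\,(\widehat{\bm{Y}}^{(t-1)} - \bm{Y}^*)$ with iteration matrix $\bm{G} = \bm{I} - \bm{E}^{-1}\bm{A}$; since the initial iterate is zero, this unrolls to $\widehat{\bm{Y}}^{(t)} - \bm{Y}^* = -\bm{G}^t \bm{Y}^*$. Because $\widehat{\bm{M}}^{(t)} = \bm{Q}^{\sf T}\widehat{\bm{Y}}^{(t)}$ and $\bm{M}^* = \bm{Q}^{\sf T}\bm{Y}^*$, the object to control is $\|\bm{Q}^{\sf T}\bm{G}^t\bm{Y}^*\|_F$ relative to $\|\bm{Q}^{\sf T}\bm{Y}^*\|_F$.

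The next step is to diagonalize through the SVD $\bm{Q} = \bm{U}\bm{\Sigma}\bm{V}^{\sf T}$. Writing $\bm{Z} := \bm{V}^{\sf T}\bm{S}\bm{S}^{\sf T}\bm{V}$, the hypothesis \eqref{eq:contraint1} reads $\|\bm{Z} - \bm{I}_{2K}\|_2 \le \varepsilon/2$. Substituting $\bm{A} = \bm{U}(\bm{\Sigma}^2 + \lambda\bm{I})\bm{U}^{\sf T}$ and $\bm{E} = \bm{U}(\bm{\Sigma}\bm{Z}\bm{\Sigma} + \lambda\bm{I})\bm{U}^{\sf T}$ reduces $\bm{G}$ to $\bm{U}\bm{\Phi}\bm{U}^{\sf T}$ with $\bm{\Phi} = (\bm{\Sigma}\bm{Z}\bm{\Sigma} + \lambda\bm{I})^{-1}\bm{\Sigma}(\bm{Z}-\bm{I})\bm{\Sigma}$. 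Using that $\bm{U},\bm{V}$ have orthonormal columns and that $\bm{Q}^{\sf T} = \bm{V}\bm{\Sigma}\bm{U}^{\sf T}$, both quantities collapse to weighted norms of $\bm{c} := \bm{U}^{\sf T}\bm{Y}^*$, namely $\|\bm{M}^*\|_F = \|\bm{\Sigma}\bm{c}\|_F$ and $\|\widehat{\bm{M}}^{(t)} - \bm{M}^*\|_F = \|\bm{\Sigma}\bm{\Phi}^t\bm{c}\|_F$.

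The crux is to extract a clean contraction factor. I would conjugate by $\bm{\Sigma}$ (nonsingular since $\bm{Q}$ has full rank $2K$), writing $\bm{\Sigma}\bm{\Phi}^t = \bm{\Psi}^t\bm{\Sigma}$ with $\bm{\Psi} := \bm{\Sigma}\bm{\Phi}\bm{\Sigma}^{-1} = \bm{B}(\bm{Z}-\bm{I})$ and $\bm{B} := \bm{\Sigma}(\bm{\Sigma}\bm{Z}\bm{\Sigma} + \lambda\bm{I})^{-1}\bm{\Sigma}$, so that $\|\widehat{\bm{M}}^{(t)} - \bm{M}^*\|_F \le \|\bm{\Psi}\|_2^t\,\|\bm{\Sigma}\bm{c}\|_F = \|\bm{\Psi}\|_2^t\,\|\bm{M}^*\|_F$. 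It then remains to show $\|\bm{\Psi}\|_2 \le \varepsilon$. Submultiplicativity gives $\|\bm{\Psi}\|_2 \le \|\bm{B}\|_2 \cdot \tfrac{\varepsilon}{2}$, so I only need $\|\bm{B}\|_2 \le 2$. For this I would use a Loewner-ordering (one-sided PSD) argument: $\|\bm{Z} - \bm{I}\|_2 \le \varepsilon/2$ forces $\bm{Z} \succeq (1 - \varepsilon/2)\bm{I}$, hence $\bm{\Sigma}\bm{Z}\bm{\Sigma} + \lambda\bm{I} \succeq (1 - \varepsilon/2)(\bm{\Sigma}^2 + \lambda\bm{I})$, which after inverting and sandwiching by $\bm{\Sigma}$ yields $\bm{B} \preceq \tfrac{1}{1-\varepsilon/2}\,\bm{\Sigma}(\bm{\Sigma}^2+\lambda\bm{I})^{-1}\bm{\Sigma} \preceq \tfrac{1}{1-\varepsilon/2}\,\bm{I}$, the last step because the diagonal entries $\sigma_i^2/(\sigma_i^2+\lambda)$ are below $1$. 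Thus $\|\bm{\Psi}\|_2 \le \tfrac{\varepsilon/2}{1-\varepsilon/2} = \tfrac{\varepsilon}{2-\varepsilon} \le \varepsilon$ for $0 < \varepsilon < 1$, which closes the argument after recalling $\|\bm{M}^*\|_F = \|\bm{W}^*\|_F$.

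The main obstacle I anticipate is exactly this weighting issue. Both the error and the target $\|\bm{M}^*\|_F$ inherit a $\bm{\Sigma}$ factor from the leading $\bm{Q}^{\sf T}$, so a naive bound $\|\bm{G}\|_2^t$ does not transfer to a bound in terms of $\|\bm{M}^*\|_F$ with the correct constant. The $\bm{\Sigma}$-conjugation that shifts the contraction onto the correctly weighted vector, together with the one-sided PSD control of $\bm{B}$, is the delicate part: a crude split such as $\|\bm{B}\|_2 \le \|\bm{\Sigma}\|_2^2\,\|(\bm{\Sigma}\bm{Z}\bm{\Sigma}+\lambda\bm{I})^{-1}\|_2$ ignores the cancellation between the $\bm{\Sigma}$ in the numerator and the $\bm{\Sigma}^2$ inside the Tikhonov-type inverse and would not yield $\|\bm{B}\|_2 \le 2$, hence would fail to bring $\|\bm{\Psi}\|_2$ below $\varepsilon$.
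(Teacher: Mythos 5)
Your proposal is correct, and it is a genuinely different (and more self-contained) route than the paper's. The paper's proof is essentially a reduction-plus-citation: it rewrites the matrix solution \eqref{eq:rzf_sol} in its normal-equation form, observes that each column $\bm{M}^*_{*i}$ solves an ordinary vector ridge-regression problem \eqref{eq:columnM}, invokes Theorem~1 of \cite{chowdhury2018iterative} to get $\|\widehat{\bm{M}}_{*i}^{(t)} - (\bm{M}^*)_{*i}\|_2 \leq \varepsilon^t \|(\bm{M}^*)_{*i}\|_2$ per column, and then sums the squared column errors to assemble the Frobenius-norm bound, finishing with the norm identity between $\bm{M}$ and $\bm{W}$. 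You instead reprove the core contraction result internally: you exploit the paper's own observation that Algorithm~\ref{main:Algorithm} is the preconditioned Richardson iteration \eqref{eq:preconditioned_Richardson} with $\bm{E} = \bm{Q}\bm{S}\bm{S}^{\sf T}\bm{Q}^{\sf T} + \lambda\bm{I}$, unroll the error recursion $\widehat{\bm{Y}}^{(t)} - \bm{Y}^* = -(\bm{I}-\bm{E}^{-1}\bm{A})^t\bm{Y}^*$ from the zero initialization, diagonalize through the SVD of $\bm{Q}$, and — this is the step the paper never has to confront because it outsources it — handle the $\bm{\Sigma}$-weighting induced by the leading $\bm{Q}^{\sf T}$ by conjugating the iteration matrix, reducing everything to $\|\bm{B}(\bm{Z}-\bm{I})\|_2$ with $\bm{B} = \bm{\Sigma}(\bm{\Sigma}\bm{Z}\bm{\Sigma}+\lambda\bm{I})^{-1}\bm{\Sigma}$. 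Your Loewner-order bound $\bm{B} \preceq \tfrac{1}{1-\varepsilon/2}\bm{I}$ is sound (it correctly uses $\bm{Z}\succeq(1-\varepsilon/2)\bm{I}$, order reversal under inversion of positive definite matrices, and $\sigma_i^2/(\sigma_i^2+\lambda)<1$), and it even yields the slightly sharper contraction factor $\varepsilon/(2-\varepsilon) \leq \varepsilon$; the required full-rank assumption $\rank(\bm{Q})=2K$ making $\bm{\Sigma}$ invertible is one the paper explicitly adopts. What each approach buys: the paper's argument is a few lines and inherits correctness from the cited work (with the same column-wise template reused verbatim for Theorem~\ref{thm:main2}), while yours makes the result self-contained at the matrix level — no column-by-column reduction, no external theorem — and makes visible exactly where the hypothesis $\|\bm{V}^{\sf T}\bm{S}\bm{S}^{\sf T}\bm{V}-\bm{I}_{2K}\|_2 \leq \varepsilon/2$ is consumed and why the na\"ive bound without the $\bm{\Sigma}$-cancellation would fail.
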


\begin{proof}
Note that by \cite{Saunders98ridgeregression}, \eqref{eq:rzf_sol} can be also expressed as
\begin{equation}\label{eq:ridge2}
\bm{M}^* = (\bm{Q}^{\sf T} \bm{Q} + \lambda \bm{I}_{2M} )^{-1} \bm{Q}^{\sf T} \bm{\Lambda}.
\end{equation}

Then each column of $\bm{M}^*$ can be considered as the solution of the following optimization problem
\begin{equation}\label{eq:columnM}
 \underset{  \bm{M}_{*i}  \in\mathbb{R}^{2M} }{\arg \min} ~ \| \bm{Q}~\bm{M}_{*i} -\bm{\Lambda}_{*i}  \|_2^2+\lambda\| \bm{M}_{*i} \|_2^2,
\end{equation}
for each $i=1,\ldots, K$.
Recall that $\bm{M}_{i*}$ and $\bm{\Lambda}_{i*}$ is the $i$-th column of $\bm{M}$ and $\bm{\Lambda}$, respectively.
By Theorem 1 in \cite{chowdhury2018iterative}, it follows that 
\begin{equation*}
\|\widehat{\bm{M}}_{*i}^{(t)} - (\bm{M}^*)_{*i} \|_2 \leq \varepsilon^t \| (\bm{M}^*)_{*i} \|_2 
\end{equation*}
for all $i=1,\ldots,K$.
Then we have
\begin{align*}
\| \widehat{\bm{M}}^{(t)} - \bm{M}^* \|_F^2 
& = \sum_{i=1}^K \| \widehat{\bm{M}}_{*i}^{(t)} - (\bm{M}^*)_{*i} \|_2^2\\
& \leq \varepsilon^{2t} \sum_{i=1}^K \|  (\bm{M}^*)_{*i} \|_2^2\\
& \leq \varepsilon^{2t} \| \bm{M}^* \|_{F}^2.
\end{align*}
Clearly,
$\|\widehat{\bm{W}}^{(t)} - \bm{W}^*\|_F = \| \widehat{\bm{M}}^{(t)} - \bm{M}^* \|_F$
and $\| \bm{W}^* \|_{F} = \| \bm{M}^* \|_{F}$.
\end{proof}

To check whether a sketching matrix $\bm{S}$
satisfies  \eqref{eq:contraint1},
a number of columns $L$ that is proportional to $2K\log{(2K)}$ is required (see Theorem \ref{thm:number_samples}).
Thus, the running time of any algorithm that computes the sketch $\bm{Q} \bm{S}$ is also proportional to $2K\log{(2K)}$.
To reduce the running time, it would be much better to use a parameter which is significantly smaller than $2K$. For simplicity of exposition, we will assume that the rank of $\bm{Q}$ is $2K$.

In the context of ridge regression, a much more important quantity than the rank 
of $\bm{Q}$ is the {\emph{(effective) degrees of freedom}} of $\bm{Q}$ as follows \cite{Dijkstra2014}:
\begin{equation}
d_{\lambda} = \sum_{i=1}^{2K} \frac{\sigma_i^2}{\sigma_i^2+{\lambda}},
\end{equation}
where $\sigma_i$ are the singular values of $\bm{Q}$ and $\lambda=\frac{\sigma^2}{\gamma}$.
Since $\lambda>0 $, it is trivial that $d_{\lambda} \leq 2K$.
That is, the degrees of freedom $d_{\lambda}$ is upper bounded by the rank of $\bm{Q}$.

Define a diagonal matrix $\bm{\Sigma}_\lambda \in \mathbb{R}^{2K\times 2K}$ whose $i$-th diagonal entry is given by
\begin{equation}\label{eq:sum}
(\bm{\Sigma}_{\lambda})_{ii}=\sqrt{\frac{\sigma_i^2}{\sigma_i^2+\lambda}},\quad i=1,\dots,2K,
\end{equation}
where 
$\sigma_i$ is the $i$-th singular value of $\bm{Q}$ and $\lambda=\frac{\sigma^2}{\gamma}$.



Now we provide a weaker constraint with the effective degrees of freedom.
\begin{theorem}\label{thm:main2}
Assume that for some constant $0 < \varepsilon <1$, the sketching matrix $\bm{S} \in \R^{2M\times L}$ satisfies the following constraint
\begin{equation}\label{eq:sketching_matrix4}
\| \bm{\Sigma}_\lambda \bm{V}^{\sf T} \bm{S} \bm{S}^{\sf T}  \bm{V\Sigma}_\lambda - \bm{\Sigma}_{\lambda}^2 \|_2 \leq \frac{\varepsilon}{4\sqrt{2}},
\end{equation}
where $\bm{V} \in \mathbb{R}^{2M\times 2K}$ is the matrix of right singular vectors of $ \bm{Q}$.
Then, after $t$ number of iterations, the approximation
$\bm{\widehat{W}}^{(t)}$ returned by Algorithm~\ref{main:Algorithm} satisfies
\begin{equation}\label{eq:th2}
\|\widehat{\bm{W}}^{(t)} - \bm{W}^*\|_F \leq \frac{\varepsilon^t}{\sqrt{2}} \Big(\|\bm{W}^* \|_F^2 +\frac{1}{2\lambda} \| \bm{U}^{\sf T}_{\xi,\perp}\bm{\Lambda} \|_F^2 \Big)^{\frac{1}{2}} ,
\end{equation}
where $\xi$ is an integer number such that 
$\sigma_{\xi+1}^2 \leq \lambda \leq \sigma_{\xi}^2$,
 $\bm{U}_{j,\perp} \in  \R^{2K\times (2K-j)}$ is the matrix of the bottom $2K-j$ left singular vectors of the matrix $\bm{Q}$,
and $\bm{W}^*$ is the true value of the RZF beamforming matrix in \eqref{eq:bfd} in the complex version.
\end{theorem}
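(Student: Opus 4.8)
The plan is to follow the template of the proof of Theorem~\ref{thm:main1}: reduce the matrix-valued problem to $K$ decoupled column-wise ridge regression problems and invoke the \emph{effective-degrees-of-freedom} version of the convergence guarantee from \cite{chowdhury2018iterative} for each column. The constraint \eqref{eq:sketching_matrix4} is precisely the hypothesis of that sharper guarantee: unlike \eqref{eq:contraint1}, which forces $\bm{S}\bm{S}^{\sf T}$ to act near-isometrically on the \emph{entire} row space of $\bm{Q}$, the weighted version \eqref{eq:sketching_matrix4} only demands approximate isometry after rescaling by $\bm{\Sigma}_\lambda$, which down-weights the directions associated with singular values small relative to $\lambda$. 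This is what lets $L$ scale with $d_\lambda$ rather than $2K$, and it is the reason an additive error term appears in \eqref{eq:th2}.

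First I would rewrite $\bm{M}^*$ in the regularized normal-equations form \eqref{eq:ridge2} and observe, exactly as in \eqref{eq:columnM}, that its $i$-th column $(\bm{M}^*)_{*i}$ is the ridge-regression solution with data matrix $\bm{Q}$ and target vector $\bm{\Lambda}_{*i}$. Applying the degrees-of-freedom theorem of \cite{chowdhury2018iterative} to each such column---using the same sketching matrix $\bm{S}$ and the same $\xi$ defined by $\sigma_{\xi+1}^2 \leq \lambda \leq \sigma_\xi^2$---yields the per-column bound
\begin{equation*}
\|\widehat{\bm{M}}_{*i}^{(t)} - (\bm{M}^*)_{*i}\|_2 \leq \frac{\varepsilon^t}{\sqrt{2}}\Big(\|(\bm{M}^*)_{*i}\|_2^2 + \frac{1}{2\lambda}\|\bm{U}_{\xi,\perp}^{\sf T}\bm{\Lambda}_{*i}\|_2^2\Big)^{\frac{1}{2}}
\end{equation*}
for every $i=1,\dots,K$. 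Squaring, summing over $i$, and using that the squared Frobenius norm of a matrix equals the sum of the squared Euclidean norms of its columns---applied both to $\bm{M}^*$ and to $\bm{U}_{\xi,\perp}^{\sf T}\bm{\Lambda}$---gives
\begin{equation*}
\|\widehat{\bm{M}}^{(t)} - \bm{M}^*\|_F^2 \leq \frac{\varepsilon^{2t}}{2}\Big(\|\bm{M}^*\|_F^2 + \frac{1}{2\lambda}\|\bm{U}_{\xi,\perp}^{\sf T}\bm{\Lambda}\|_F^2\Big).
\end{equation*}
Taking square roots and then transferring back to the complex formulation via the isometries $\|\widehat{\bm{W}}^{(t)} - \bm{W}^*\|_F = \|\widehat{\bm{M}}^{(t)} - \bm{M}^*\|_F$ and $\|\bm{W}^*\|_F = \|\bm{M}^*\|_F$ established at the close of the proof of Theorem~\ref{thm:main1} produces \eqref{eq:th2}.

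The genuinely hard analytic work is encapsulated in the cited per-column theorem, so within this proof the main thing to verify carefully is that \eqref{eq:sketching_matrix4} matches its hypothesis \emph{uniformly across all $K$ columns}. This is the crucial point: the constraint depends only on $\bm{Q}$ (through $\bm{V}$ and $\bm{\Sigma}_\lambda$) and on $\bm{S}$, not on the target, so a single sketch simultaneously controls every column, and the additive error terms aggregate cleanly into the Frobenius quantity $\|\bm{U}_{\xi,\perp}^{\sf T}\bm{\Lambda}\|_F^2$. I would also double-check that the splitting index $\xi$ in \cite{chowdhury2018iterative} is defined by the same threshold $\sigma_{\xi+1}^2 \leq \lambda \leq \sigma_\xi^2$ and that the numerical constant $\tfrac{\varepsilon}{4\sqrt{2}}$ on the right of \eqref{eq:sketching_matrix4} is exactly the one under which that theorem delivers per-iteration contraction factor $\varepsilon$ together with the $\tfrac{1}{\sqrt{2}}$ prefactor; any mismatch in these constants would have to be absorbed before the column-wise bounds can be summed.
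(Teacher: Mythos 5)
Your proposal is correct and follows essentially the same route as the paper's proof: reduce \eqref{eq:real_main_formulation111} to the $K$ column-wise ridge regressions \eqref{eq:columnM}, invoke the degrees-of-freedom guarantee (Theorem~2 of \cite{chowdhury2018iterative}) per column with the same $\xi$, square and sum over columns to pass to Frobenius norms, and transfer back to the complex setting by isometry. The only cosmetic difference is that you state the per-column bound in the pre-squared form $\tfrac{\varepsilon^t}{\sqrt{2}}\bigl(a^2+b^2\bigr)^{1/2}$, which is exactly what the paper derives from the cited relative-additive bound $\tfrac{\varepsilon^t}{2}(a+b)$ via $(a+b)^2\le 2(a^2+b^2)$, so the constant bookkeeping you flag indeed absorbs cleanly.
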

\begin{proof}
%
%
Since each column of $\bm{M}^*$ can be considered as the solution of \eqref{eq:columnM},
by Theorem 2 in \cite{chowdhury2018iterative}, it follows that 
\begin{equation*}
\|\widehat{\bm{M}}_{*i}^{(t)} - \bm{M}_{*i}^* \|_2 \leq \frac{\varepsilon^t}{2} \Big(\|\bm{M}_{*i}^* \|_2 +\frac{1}{\sqrt{2\lambda}} \| \bm{U}^{\sf T}_{\xi,\perp}\bm{\Lambda}_{*i} \|_2 \Big),
\end{equation*}
for all $i=1,\ldots,K$.
Then we have
\begin{align*}
\| \widehat{\bm{M}}^{(t)} - \bm{M}^* \|_F^2 
& = \sum_{i=1}^K \| \widehat{\bm{M}}_{*i}^{(t)} - \bm{M}_{*i}^* \|_2^2\\
& \leq 
\sum_{i=1}^K  \frac{\varepsilon^{2t}}{4} \Big(\|\bm{M}_{*i}^* \|_2 +\frac{1}{\sqrt{2\lambda}} \| \bm{U}^{\sf T}_{\xi,\perp}\bm{\Lambda}_{*i} \|_2 \Big)^2\\
& \leq 
\sum_{i=1}^K  \frac{\varepsilon^{2t}}{2} \Big(\|\bm{M}_{*i}^* \|_2^2 +\frac{1}{2\lambda} \| \bm{U}^{\sf T}_{\xi,\perp}\bm{\Lambda}_{*i} \|_2^2 \Big)\\
& \leq 
  \frac{\varepsilon^{2t}}{2} \Big(\|\bm{M}^* \|_F^2 +\frac{1}{2\lambda} \| \bm{U}^{\sf T}_{\xi,\perp}\bm{\Lambda} \|_F^2 \Big).
\end{align*}
\end{proof}

This improved dependency on $d_\lambda$ instead of the rank of matrix $\bm{Q}$ results in a mild loss in accuracy.
$\lambda$ can be thought of as regularizing the bottom $2K-\xi$ singular values of the matrix $\bm{Q}$, since it dominates them.
Theorem~\ref{thm:main2} presents a quality-of-approximation result, which uses a relative-additive error approximation.
The term $\| \bm{U}^{\sf T}_{\xi,\perp}\bm{\Lambda} \|_F$ is a norm of the part of matrix $\bm{\Lambda}$ that lies on the regularized component of $\bm{Q}$. As the increase of this part, the
quality of the approximation will become worsen. The error decreases exponentially fast with the number of iterations.

The bounds of \eqref{eq:contraint1} and \eqref{eq:sketching_matrix4} guarantee high-quality approximations to the optimal solution. Constraint \eqref{eq:contraint1} can be satisfied by constructing the sampling-and-rescaling matrix $\bm{S}$ whose size depends on the rank of matrix $\bm{Q}$, and Theorem \ref{thm:main1} guarantees relative error approximations. The second constraint \eqref{eq:sketching_matrix4} can be satisfied by sampling with respect to the ridge leverage scores, which construct the sampling-and-rescaling matrix $\bm{S}$ whose size depends on the degrees of freedom $d_\lambda$, and Theorem~\ref{thm:main2} guarantees relative error approximations.

\subsection{Sketching Matrices}\label{subsec:sketching}
Matrix sketching attempts to reduce the size of large matrices while minimizing the loss of spectral information that is useful in tasks like linear regression.
Matrix sketching algorithms use a typically randomized procedure to compress $\bm{Q}\in \R^{2K\times 2M}$ into an approximation (or ``sketch") $\bm{C} \in \R^{2K \times L}$ with many fewer columns $(L \ll 2M)$.
Matrix sketching can be accomplished by \emph{random sampling} or \emph{random projection}.
Random projection algorithms construct $\bm{C}$ by forming $L$ random linear combinations of the columns in $\bm{Q}$. On the other hand, random sampling algorithms construct $\bm{C}$ by selecting and possibly rescaling a $L$ columns in $\bm{Q}$.
In the latter case, we call a sketching matrix $\bm{S}$ as the \emph{sampling-and-rescaling matrix}. 

Sampling itself is simple and extremely efficient. A simple way to perform this random sampling would be to select those columns uniformly at random in i.i.d. trials, which mean 
$p_1=p_2 = \cdots = p_{2M} = \frac{1}{2M}$. 
A more sophisticated and much more powerful way to do this would be to construct an important sampling routines which select columns using carefully chosen, non-uniform probabilities $\{p_i\}_{i=1}^n$.
It is known that variations on the standard \textquotedblleft statistical leverage scores" give probabilities that are provably sufficient for approximations such as low-rank approximation.
Many of these probabilities are modifications on the standard statistical leverage scores.
\begin{definition}
The \emph{(statistical) leverage score} of the $i^{th}$ column $\bm{Q}_{*i}$ of $\bm{Q}$ is defined as:
\begin{equation}
     \tau_i = \bm{Q}_{*i}^{\sf T}(\bm{QQ}^{\sf T})^{\dagger}\bm{Q}_{*i},
\end{equation}
for $i=1,2,\ldots, 2M$.
\end{definition}
Here, $\dagger$ denotes the Moore-Penrose pseudoinverse of a matrix. When $\bm{Q}\bm{Q}^{\sf T}$ is full rank,
 $(\bm{Q}\bm{Q}^{\sf T})^\dagger = (\bm{Q}\bm{Q}^{\sf T})^{-1}$. $\tau_i$ measures how important $\bm{Q}_{*i}$ is in composing the range of $\bm{Q}$. It is maximized at 1 when $\bm{Q}_{*i}$ is linearly independent from $\bm{Q}$'s other columns and decreases when many other columns approximately align with $\bm{Q}_{*i}$ or when $\|\bm{Q}_{*i} \|_2$ is small.

\emph{Leverage score sampling} sets $p_i$ proportional to the
(exact or approximate) leverage scores $\tau_i$ of $\bm{Q}$.
The leverage scores are used in fast sketching algorithms for linear regression and matrix preconditioning\cite{Drineas:2006:SAL:1109557.1109682,Cohen:2015:USM:2688073.2688113,4031351}.

Notably, leverage scores are defined in terms of $\bm{Q}_{*i}$, which is not always unique and
regardless can be sensitive to matrix perturbations. As a result, the scores can change drastically
when $\bm{Q}$ is modified slightly or when only partial information about the matrix is known. This largely limits the possibility of quickly approximating the scores with sampling algorithms, and motivates our adoption of a new leverage score.
Rather than using leverage scores based on $\bm{Q}_{*i}$, we employ regularized scores called \emph{ridge leverage scores}, which have been used for approximate kernel ridge regression \cite{Alaoui:2015:FRK:2969239.2969326} and in works on iteratively computing standard leverage scores \cite{DBLP:journals/corr/KapralovLMMS14,DBLP:journals/corr/KapralovLMMS14}. 
For a given regularization parameter $\lambda$, we define the $\lambda$-ridge leverage score as:
\begin{equation}
     \tau_i^{\lambda} = \bm{Q}_{*i}^{ \sf T}(\bm{QQ}^{\sf T} + \lambda \bm{I}_{2K})^{-1}\bm{Q}_{*i}.
\end{equation}

Let $\bm{Q}_{\ell}$ be the best low-rank approximation for $\bm{Q}$ with respect to the Frobenius norm.
In other words, 
$$
\bm{Q}_{\ell} = \underset{\bm{X}:\rank(\bm{X})\leq \ell }{\arg\min} \| \bm{Q} - \bm{X} \|_F.
$$
Note that $\bm{Q}_\ell$ can be expressed as  $\bm{U}_\ell \bm{U}_\ell^{\sf T} \bm{Q}$.
That is, the best rank $\ell$ approximation can be found by projecting $\bm{Q}$ onto the span of its top $\ell$ singular vectors.
We will always set $\lambda =\|\bm{Q} -\bm{Q}_\ell \|^2_F/\ell$ as follows.


\begin{definition}
The \emph{ridge leverage score} of the $i^{th}$ column $\bm{Q}_{*i}$ of $\bm{Q}$ with respect to the ridge parameter $\lambda>0$ is defined as:
\begin{equation}
     \bar{\tau}_i = \bm{Q}_{*i}^{ \sf T} \Big(\bm{QQ}^{\sf T} + \frac{\|\bm{Q} -\bm{Q}_\ell \|^2_F}{\ell} \bm{I}_{2K} \Big)^{-1}\bm{Q}_{*i},
\end{equation}
for $i=1,2,\ldots, 2M$.
\end{definition}
Note that the ridge leverage score can also be expressed as 
\begin{equation*}
  \bar{\tau}_i = \| (\bm{V} \bm{\Sigma}_{\lambda})_{i*} \|_2^2  \quad \text{for all } i=1,2,\ldots, 2M,
\end{equation*}
where $\bm{V} \in \mathbb{R}^{2M\times 2K}$ is the matrix of right singular vectors of $ \bm{Q}$ and $\bm{\Sigma}_{\lambda}$ is defined as \eqref{eq:sum}.
The constraint  \eqref{eq:sketching_matrix4} can also be satisfied by sampling with respect to the ridge leverage scores
\cite{Alaoui:2015:FRK:2969239.2969326}. 
The difference is that, instead of having the column size $L$ of the matrix $\bm{S}$ depend on $2K$, it now depends on $d_\lambda$, which could be considerably smaller. 
Indeed, it follows that by sampling-and-rescaling $\mathcal{O}( d_\lambda \ln d_\lambda)$ from the design matrix $\bm{Q}$ (using either exact or approximate ridge leverage scores).

\begin{algorithm}[t!]
\KwIn{Sampling probabilities $p_i$, $i=1,\ldots,2M$; integer $L \ll 2M$;\\
$\bm{S} \leftarrow \bm{O}_{2M\times L}$ ;\\
{\bf for $j=1$ to $L$ do}\\
{\bf Pick $i_j \in \{  1,\ldots, 2M  \} $ with $\mathbb{P}(i_j =i) = p_i$};\\
{\bf $\bm{S}_{i_j,j} \leftarrow (Lp_{i_j})^{-\frac{1}{2}}$};  \\
{\bf end for }
} 

\KwOut{Sampling-and-rescaling matrix $\bm{S}$;}
    \caption{ Construct sampling-and-rescaling matrix \label{Algorithm2}}
\end{algorithm}

In this article we only consider the \emph{sampling-and-rescaling matrix} for a sketching matrix $\bm{S}$. Algorithm \ref{Algorithm2} provides the construction of it.
The following theorems show how many sampled columns guarantee that the sketching matrix holds the constraint \eqref{eq:contraint1}.
This theorem is adopted from Theorem 3 in \cite{chowdhury2018iterative}, so the proof is omitted.
\begin{theorem}\label{thm:number_samples}
Let $\bm{V}\in \mathbb{R}^{2M\times 2K}$ be the matrix of right singular vectors of $\bm{Q}$.
Let $\bm{S}$ be constructed by Algorithm \ref{Algorithm2} with the sampling probabilities $p_i = \| \bm{V}_{i*} \|_2^2 /2K$ for $i=1,\ldots, 2M$. Let $\delta$ be a failure probability and let $0< \varepsilon \leq 1$ be an accuracy parameter. If the number of sampled columns $L$ satisfies
\begin{equation}\label{eq:set_equal}
L \geq \frac{16K}{3 \varepsilon^2} \log{ \bigg( \frac{4(1+ 2K)}{\delta}    \bigg)},
\end{equation} 
then, with probability at least $1-\delta$,
\begin{equation}\label{eq:sketching_matrix2}
\| \bm{V}^{\sf T} \bm{S} \bm{S}^{\sf T}  \bm{V} - \bm{I}_{2K} \|_2 \leq \varepsilon.
\end{equation}
\end{theorem}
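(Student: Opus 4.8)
The plan is to recognize the left-hand side of \eqref{eq:sketching_matrix2} as the deviation of an empirical sum of independent rank-one matrices from its mean, and then to invoke a matrix concentration inequality (matrix Bernstein). First I would unfold the structure of $\bm{S}$ produced by Algorithm~\ref{Algorithm2}: each column $\bm{S}_{*j}$ has a single nonzero entry $(Lp_{i_j})^{-1/2}$ in the randomly chosen row $i_j$, so $\bm{V}^{\sf T}\bm{S}_{*j} = (Lp_{i_j})^{-1/2}\bm{V}_{i_j*}^{\sf T}$ and hence
\begin{equation*}
\bm{V}^{\sf T}\bm{S}\bm{S}^{\sf T}\bm{V} = \sum_{j=1}^L \bm{X}_j, \qquad \bm{X}_j := \frac{1}{Lp_{i_j}}\,\bm{V}_{i_j*}^{\sf T}\bm{V}_{i_j*},
\end{equation*}
where $\bm{X}_1,\dots,\bm{X}_L$ are i.i.d. random positive semidefinite matrices of size $2K\times 2K$. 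Since $i_j$ is drawn with probability $p_i$ and the columns of $\bm{V}$ are orthonormal,
\begin{equation*}
\mathbb{E}[\bm{X}_j] = \sum_{i=1}^{2M} p_i\,\frac{1}{Lp_i}\,\bm{V}_{i*}^{\sf T}\bm{V}_{i*} = \frac{1}{L}\,\bm{V}^{\sf T}\bm{V} = \frac{1}{L}\bm{I}_{2K},
\end{equation*}
so $\mathbb{E}\big[\sum_{j}\bm{X}_j\big]=\bm{I}_{2K}$ and \eqref{eq:sketching_matrix2} becomes the statement that this random sum concentrates around its mean in operator norm.

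The key role of the leverage-score probabilities $p_i=\|\bm{V}_{i*}\|_2^2/2K$ is to make every summand uniformly bounded. Since each $\bm{X}_j$ is rank one,
\begin{equation*}
\|\bm{X}_j\|_2 = \frac{\|\bm{V}_{i_j*}\|_2^2}{Lp_{i_j}} = \frac{\|\bm{V}_{i_j*}\|_2^2}{L\,\|\bm{V}_{i_j*}\|_2^2/2K} = \frac{2K}{L} =: R,
\end{equation*}
a bound that holds deterministically, and this is exactly the feature that uniform sampling cannot guarantee. A short computation with the same $p_i$ gives $\mathbb{E}[\bm{X}_j^2]=\frac{2K}{L^2}\bm{I}_{2K}$, so the centered terms $\bm{Z}_j:=\bm{X}_j-\frac1L\bm{I}_{2K}$ satisfy the matrix-variance bound $\big\|\sum_{j}\mathbb{E}[\bm{Z}_j^2]\big\|_2 = \frac{2K-1}{L}\le \frac{2K}{L}=:\sigma^2$ together with $\|\bm{Z}_j\|_2\le R$.

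Finally I would apply the matrix Bernstein inequality to the independent, centered, symmetric terms $\bm{Z}_j$, which yields
\begin{equation*}
\mathbb{P}\Big\{\Big\|\sum_{j=1}^L\bm{Z}_j\Big\|_2\ge \varepsilon\Big\}\le 2(2K)\exp\!\Big(\frac{-\varepsilon^2/2}{\sigma^2+R\varepsilon/3}\Big).
\end{equation*}
Substituting $\sigma^2=R=2K/L$ and using $\varepsilon\le 1$ to bound $1+\varepsilon/3\le 4/3$ simplifies the exponent to $3\varepsilon^2 L/(16K)$, so the failure probability drops below $\delta$ precisely when $L\ge \frac{16K}{3\varepsilon^2}\log\!\big(4(1+2K)/\delta\big)$, recovering \eqref{eq:set_equal}; the residual $(1+2K)$ versus $2K$ inside the logarithm reflects only the dimensional prefactor of the particular version of matrix Bernstein one cites. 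I expect the single real obstacle to be this constant bookkeeping: matching the stated $16/3$ and $4(1+2K)$ requires selecting the right concentration bound and carefully estimating both $R$ and $\sigma^2$, whereas the structural steps---rewriting the sketch as a sum of rank-one terms, computing the mean, and exploiting the leverage scores for the uniform norm bound---are routine.
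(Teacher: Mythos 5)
Your proof is correct, and it is worth noting that the paper itself contains no proof of this statement: it states that the theorem ``is adopted from Theorem 3 in \cite{chowdhury2018iterative}, so the proof is omitted.'' What you have done is reconstruct, in a self-contained way, the standard argument that underlies the cited result: rewriting $\bm{V}^{\sf T}\bm{S}\bm{S}^{\sf T}\bm{V}$ as a sum of $L$ i.i.d.\ rank-one matrices $\bm{X}_j = (Lp_{i_j})^{-1}\bm{V}_{i_j*}^{\sf T}\bm{V}_{i_j*}$, verifying $\mathbb{E}\big[\sum_j \bm{X}_j\big]=\bm{I}_{2K}$ from the orthonormality of the columns of $\bm{V}$ (using $\sum_i \|\bm{V}_{i*}\|_2^2 = 2K$, which also confirms the $p_i$ sum to one), and invoking matrix Bernstein with the leverage-score probabilities supplying the deterministic bound $\|\bm{X}_j\|_2 = 2K/L$. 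I checked your moment computations: $\mathbb{E}[\bm{X}_j^2]=\frac{2K}{L^2}\bm{I}_{2K}$ is right, hence $\big\|\sum_j\mathbb{E}[\bm{Z}_j^2]\big\|_2=\frac{2K-1}{L}\le \frac{2K}{L}$, and the eigenvalues of $\bm{Z}_j$ lie in $[-1/L,(2K-1)/L]$, so $\|\bm{Z}_j\|_2\le 2K/L$ holds as claimed. Your exponent bookkeeping with $\varepsilon\le 1$ giving $1+\varepsilon/3\le 4/3$ yields failure probability at most $4K\exp\big(-3\varepsilon^2L/(16K)\big)$, which is in fact \emph{slightly sharper} than the stated bound: the paper's hypothesis $L \ge \frac{16K}{3\varepsilon^2}\log\big(4(1+2K)/\delta\big)$ forces your failure probability below $\delta K/(1+2K)<\delta$, so your version of matrix Bernstein (dimensional prefactor $2d=4K$) more than covers the stated conclusion; the $4(1+2K)$ in the paper presumably traces to the particular (e.g., intrinsic-dimension) concentration bound used in \cite{chowdhury2018iterative}. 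The one cosmetic point worth flagging is the degenerate rows: if $\|\bm{V}_{i*}\|_2=0$ then $p_i=0$ and that row is never sampled, so the expression $(Lp_{i_j})^{-1/2}$ is always well defined on the support of the sampling distribution.
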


The sampling probabilities $p_i = \| \bm{V}_{i*} \|_2^2 /2K$ are the column \emph{leverage scores} \cite{chowdhury2018iterative} of the channel matrix $\bm{Q}$. 
Setting $L=\mathcal{O} (\varepsilon^{-2} K \ln{K})$ suffices to satisfy the condition \eqref{eq:contraint1}.
\cite{cohen_et_al:LIPIcs:2016:6278} demonstrated
a construction for such $\bm{S}$ with $L=\mathcal{O} (\varepsilon^{-2}K)$ columns such that, for 
$\bm{Q} \in \R^{2M \times 2K}$, the product $\bm{Q} \bm{S}$ can be computed in time $\mathcal{O}(nnz(\bm{Q})) + \mathcal{O}(K^3/ \varepsilon^\lambda )$ for some constant $\lambda$. Here $nnz(\bm{Q})$ is the number of nonzero entries of $\bm{Q}$.

Additionally, there are a variety of sketching matrix constructions for $\bm{S}$ that can satisfy \eqref{eq:sketching_matrix2}. The running time of sketch $\bm{QS}$ depends on the dimension of $\bm{S}$, and the construction of $\bm{S}$ sampling with respect to $\emph{leverage scores}$ is proportional to $2K$ (we assume that $\rank{(\bm{Q})}=2K$), which means the running time of $\bm{QS}$ is also proportional to $2K$. Therefore, we let $\bm{S}$ dimensionality depend on the degrees of freedom $d_\lambda$ of the ridge regression problem, as opposed to the rank of matrix $\bm{Q}$. In this way, the running time would result in significant savings.

To achieve the reduction in running time, the column size $L$ of matrix $\bm{S}$ is thus better designed proportional to degrees of freedom $d_\lambda$ which depends on the distribution of the singular value of $\bm{Q}$ and $\lambda$ instead of proportional to $2K$ for $d_\lambda \leq 2K$, which could be significantly smaller than $2K$.

\begin{theorem}\label{thm:number_samples2}
Let $\bm{V}\in \mathbb{R}^{2M\times 2K}$ be the matrix of right singular vectors of $\bm{Q}$.
Let $\bm{S}$ be constructed by Algorithm \ref{Algorithm2} with the sampling probabilities $p_i = \| (\bm{V} \bm{\Sigma}_\lambda)_{i*} \|_2^2 /d_\lambda$ for $i=1,\ldots, 2M$. Let $\delta$ be a failure probability and let $0< \varepsilon \leq 1$ be an accuracy parameter. If the number of sampled columns $L$ satisfies
\begin{equation}\label{eq:set_equal}
L \geq \frac{8d_{\lambda}}{3 \varepsilon^2} \log{ \bigg( \frac{4(1+ d_{\lambda})}{\delta}    \bigg)},
\end{equation} 
then, with probability at least $1-\delta$,
\begin{equation}\label{eq:degree_freedom}
\| \bm{\Sigma}_\lambda \bm{V}^{\sf T} \bm{S} \bm{S}^{\sf T}  \bm{V} \bm{\Sigma}_\lambda - \bm{\Sigma}_{\lambda}^2 \|_2 \leq \varepsilon.
\end{equation}
\end{theorem}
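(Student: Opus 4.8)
The plan is to read \eqref{eq:degree_freedom} as a matrix concentration statement and prove it with a matrix Bernstein inequality, in exact parallel to Theorem~\ref{thm:number_samples} but with the ridge leverage scores $p_i=\|(\bm V\bm\Sigma_\lambda)_{i*}\|_2^2/d_\lambda$ replacing the plain leverage scores. First I would abbreviate $\bm B:=\bm V\bm\Sigma_\lambda\in\R^{2M\times 2K}$, so that the matrix inside \eqref{eq:degree_freedom} is $\bm B^{\sf T}\bm S\bm S^{\sf T}\bm B-\bm B^{\sf T}\bm B$, using $\bm\Sigma_\lambda^2=\bm\Sigma_\lambda\bm V^{\sf T}\bm V\bm\Sigma_\lambda=\bm B^{\sf T}\bm B$. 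By the construction in Algorithm~\ref{Algorithm2}, $\bm S\bm S^{\sf T}=\sum_{j=1}^L(Lp_{i_j})^{-1}\bm e_{i_j}\bm e_{i_j}^{\sf T}$, where $\bm e_i$ is the $i$-th standard basis vector and the indices $i_1,\dots,i_L$ are i.i.d.\ with $\mathbb P(i_j=i)=p_i$. Hence $\bm B^{\sf T}\bm S\bm S^{\sf T}\bm B=\sum_{j=1}^L\bm X_j$, where $\bm X_j:=(Lp_{i_j})^{-1}\bm B_{i_j*}^{\sf T}\bm B_{i_j*}$ are i.i.d.\ symmetric positive semidefinite rank-one matrices.

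Next I would verify unbiasedness: a direct computation gives $\mathbb E[\bm X_j]=L^{-1}\sum_{i=1}^{2M}\bm B_{i*}^{\sf T}\bm B_{i*}=L^{-1}\bm B^{\sf T}\bm B=L^{-1}\bm\Sigma_\lambda^2$, so $\mathbb E[\sum_j\bm X_j]=\bm\Sigma_\lambda^2$. The claim then reduces to showing $\|\sum_{j=1}^L(\bm X_j-\mathbb E[\bm X_j])\|_2\le\varepsilon$ with probability at least $1-\delta$, to which I would apply the (intrinsic-dimension) matrix Bernstein inequality. This needs two ingredients: a uniform almost-sure bound $R$ on $\|\bm X_j-\mathbb E[\bm X_j]\|_2$, and a bound $\sigma^2$ on the spectral norm of the variance $\sum_j\mathbb E[(\bm X_j-\mathbb E[\bm X_j])^2]$, together with the intrinsic dimension of that variance matrix.

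The crucial simplification comes from the ridge-leverage-score weighting. Since $p_i=\|\bm B_{i*}\|_2^2/d_\lambda$, the rescaling makes the summands uniformly bounded, $\|\bm X_j\|_2=\|\bm B_{i_j*}\|_2^2/(Lp_{i_j})=d_\lambda/L$, independent of the realized index; because $\bm X_j$ and $\mathbb E[\bm X_j]$ are both positive semidefinite with $\|\mathbb E[\bm X_j]\|_2=L^{-1}\|\bm\Sigma_\lambda^2\|_2\le 1/L\le d_\lambda/L$ (in the regime $d_\lambda\ge1$), I obtain $R=d_\lambda/L$. For the variance, the identity $(\bm B_{i*}^{\sf T}\bm B_{i*})^2=\|\bm B_{i*}\|_2^2\,\bm B_{i*}^{\sf T}\bm B_{i*}$ together with $p_i^{-1}\|\bm B_{i*}\|_2^2=d_\lambda$ gives $\mathbb E[\bm X_j^2]=(d_\lambda/L^2)\bm\Sigma_\lambda^2$, hence $\sum_j\mathbb E[(\bm X_j-\mathbb E[\bm X_j])^2]\preceq(d_\lambda/L)\bm\Sigma_\lambda^2$ and $\sigma^2\le(d_\lambda/L)\|\bm\Sigma_\lambda^2\|_2\le d_\lambda/L$. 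The intrinsic dimension of the variance matrix I would bound by $\trace(\bm\Sigma_\lambda^2)/\|\bm\Sigma_\lambda^2\|_2\le 1+d_\lambda$, which yields the $4(1+d_\lambda)$ prefactor. Substituting $R=\sigma^2=d_\lambda/L$ and deviation $\varepsilon$ into Bernstein and using $1+\varepsilon/3\le 4/3$ for $\varepsilon\le1$ makes the exponent at least $3L\varepsilon^2/(8d_\lambda)$; forcing $4(1+d_\lambda)\exp(-3L\varepsilon^2/(8d_\lambda))\le\delta$ gives exactly the stated threshold $L\ge\frac{8d_\lambda}{3\varepsilon^2}\log\frac{4(1+d_\lambda)}{\delta}$.

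The main obstacle is purely the constant bookkeeping needed to hit this threshold rather than a looser one: one must invoke the \emph{intrinsic-dimension} refinement of matrix Bernstein, so that the ambient dimension $2K$ in the failure-probability factor is replaced by the effective $1+d_\lambda$, and one must check that the negative correction $-L^{-1}\bm\Sigma_\lambda^4$ dropped from the variance only improves the bound (and that the two-sided spectral-norm application is absorbed into the prefactor). Everything else is a direct transcription of the scalar identities above; the algebra turning the leverage-score weighting into the clean values $R=\sigma^2=d_\lambda/L$ is routine once the unbiased estimator $\sum_j\bm X_j$ is set up, exactly mirroring the proof structure already used for Theorem~\ref{thm:number_samples}.
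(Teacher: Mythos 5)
Your overall strategy is sound and, in fact, is essentially the only comparison available: the paper itself gives \emph{no} proof of Theorem~\ref{thm:number_samples2} --- like Theorem~\ref{thm:number_samples}, it is adopted from \cite{chowdhury2018iterative} with the proof omitted --- and the route you take (write $\bm{B}:=\bm{V}\bm{\Sigma}_\lambda$, decompose $\bm{B}^{\sf T}\bm{S}\bm{S}^{\sf T}\bm{B}=\sum_{j=1}^L\bm{X}_j$ into i.i.d.\ rank-one terms, then apply intrinsic-dimension matrix Bernstein) is precisely the route of that reference. Your computations of $\mathbb{E}[\bm{X}_j]=L^{-1}\bm{\Sigma}_\lambda^2$, of $\|\bm{X}_j\|_2=d_\lambda/L$ (the point of ridge-leverage weighting), and of $\sum_j\mathbb{E}[(\bm{X}_j-\mathbb{E}\bm{X}_j)^2]\preceq (d_\lambda/L)\bm{\Sigma}_\lambda^2$ are all correct; note in passing that $\|\mathbb{E}\bm{X}_j\|_2\le d_\lambda/L$ holds unconditionally, since $\|\bm{\Sigma}_\lambda^2\|_2=\max_i a_i\le\sum_i a_i=d_\lambda$ with $a_i=\sigma_i^2/(\sigma_i^2+\lambda)$, so your hedge ``$d_\lambda\ge1$'' is unnecessary.

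The genuine gap is the step you yourself identify as the crux: the claim $\mathrm{intdim}\big((d_\lambda/L)\bm{\Sigma}_\lambda^2\big)=\trace(\bm{\Sigma}_\lambda^2)/\|\bm{\Sigma}_\lambda^2\|_2\le 1+d_\lambda$ is \emph{false} in general. Indeed $\trace(\bm{\Sigma}_\lambda^2)/\|\bm{\Sigma}_\lambda^2\|_2=d_\lambda/a_1$ with $a_1=\sigma_1^2/(\sigma_1^2+\lambda)\le 1$, so the intrinsic dimension is at least $d_\lambda$, and it exceeds $1+d_\lambda$ as soon as $a_1<d_\lambda/(1+d_\lambda)$: e.g., if all singular values satisfy $\sigma_i^2=\lambda$, then $a_1=\tfrac12$, $d_\lambda=K$, and the intrinsic dimension equals $2K=2d_\lambda>1+d_\lambda$ for $d_\lambda>1$; worse, when $\lambda\gg\sigma_1^2$ the ratio $d_\lambda/a_1$ can be as large as $2K$ even though $d_\lambda$ is arbitrarily small. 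So your Bernstein application does not yield the prefactor $4(1+d_\lambda)$, hence not the stated threshold on $L$. Two smaller leaks compound this: the two-sided spectral bound costs an honest factor of $2$ (applying the one-sided intrinsic bound to $\pm\sum_j(\bm{X}_j-\mathbb{E}\bm{X}_j)$ gives $8\cdot\mathrm{intdim}$, which cannot simply be ``absorbed'' into $4(1+d_\lambda)$), and the intrinsic-dimension Bernstein inequality is only valid on the range $t\ge\|\bm{V}_{\mathrm{var}}\|_2^{1/2}+R/3$, which you never verify (it does hold here: the hypothesis forces $L\ge 16d_\lambda/(3\varepsilon^2)$, whence $\sqrt{d_\lambda/L}+d_\lambda/(3L)<\varepsilon$, but the check belongs in the proof). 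A workable repair is to add the natural assumption $\lambda\le\sigma_1^2$ --- already implicit in the paper's use of $\xi\ge1$ in Theorem~\ref{thm:main2} --- which gives $a_1\ge\tfrac12$ and $\mathrm{intdim}\le 2d_\lambda$, proving the theorem with $4(1+d_\lambda)$ replaced by a prefactor of order $d_\lambda$ (e.g.\ $16d_\lambda$) and a correspondingly adjusted constant in the bound on $L$; recovering the exact constants as stated requires reproducing the accounting in \cite{chowdhury2018iterative}, from which the paper imports them.
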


The sampling probabilities $p_i = \| (\bm{V} \bm{\Sigma}_\lambda)_{i*} \|_2^2 /d_\lambda$ are the column \emph{ridge leverage scores} \cite{DBLP:conf/soda/CohenMM17,Alaoui:2015:FRK:2969239.2969326} of the channel matrix $\bm{Q}$. 
 Similarly to the constraint of ~\eqref{eq:sketching_matrix2}, setting $L=\mathcal{O} (d_\lambda \ln{d_\lambda})$ suffices to satisfy the condition \eqref{eq:degree_freedom}.
 Note that while the leverage scores which construct the sampling-and-rescaling matrix $\bm{S}$ with the column size $L$ depends on rank of $\bm{Q}$, the ridge leverage scores construct $\bm{S}$ depend on $d_\lambda$, which could be considerably smaller than rank of $\bm{Q}$. Hence, it could surely achieve time saving. However, the running time savings would lead to a drop in accuracy as shown in Theorem \ref{thm:main2}.

\subsection{Time Complexity}
We now discuss the time complexity of Algorithm \ref{main:Algorithm}.
Note that each column $\bm{M}$ in  \eqref{eq:rzf_sol} can be computed by each column of $\bm{\Lambda}$, separately. We consider $\bm{\Lambda}$ as a column vector.
Let $\bm{\Theta} =\bm{Q} \bm{S} \bm{S}^{\sf T}\bm{Q}^{\sf T} + \lambda \bm{I}_{2K} $.
Note that to find $\bm{\Theta}^{-1}$, it suffices to compute the singular value decomposition of $\bm{Q} \bm{S}$.
Since the singular values of $\bm{\Theta}$ can be computed through $\bm{\Sigma}_{\bm{Q} \bm{S}}+\lambda \bm{I}_{2K}$, where $\bm{\Sigma}_{\bm{A}}$ denotes the singular value of $\bm{A}$. And the left and right singular vectors of $\bm{\Theta}$ are the same as the left singular vectors of $\bm{Q} \bm{S}$. We store it implicitly by storing its left (and right) singular vector $\bm{U}_{\bm{\Theta}}$ and its singular values $\bm{\Sigma} _{\bm{\Theta}}$, before we just compute all the necessary matrix-vector products using this implicit representation of $\bm{\Theta}^{-1}$. The above analysis shows that we do not need to compute $\bm{\Theta}^{-1}$ directly. Thus computing $\bm{\Theta}^{-1}$ takes $\mathcal{O}(LK^2)$ time.
%
%
%
%
%
%

Updating each $\bm{\Lambda}^{(j)}$, $\bm{Y}^{(j)}$, and $\widetilde{\bm{M} }^{(j)}$ is dominated by the aforementioned running times, as all updates amount to just matrix-vector products.
Thus, summing over all $t$ iterations, the running time of Algorithm 
\ref{main:Algorithm} is given by
\begin{equation}
    \mathcal{O}(t\cdot nnz(\bm{Q})) 
      +\mathcal{O}(LK^2).
\end{equation}
 Thus the time complexity is reduced evidently. Note that the complexity of computing the matrix inversion  \eqref{eq:rzf_sol} is $\mathcal{O}(MK^2)$.

\section{The system sum-rate analysis with approximate RZF beamformers}
\label{sumrateprove}
In this section, we show that the system sum-rate of the randomized sketching based beamformer converges to the sum-rate of the RZF beamforming matrix as the number of iterations increases. Moreover, if an approximation sequence converges to the true beamforming matrix with the rate of convergence $\mathcal{O}(\beta_t)$, then the system sum-rate of the approximation sequence converges with the same rate of convergence $\mathcal{O}(\beta_t)$.
Before stating our main results, we introduce the extra notation, $\phi_{kj}$, to cast SINR at the $ k $-th user in  \eqref{eq:SINR} with a simpler form.
From now on, we assume that the channel matrix $\bm{H}$ is fixed and the beamforming matrix $\bm{W}$ is considered as complex variables.
Then we can easily deal with the system sum-rate for any approximate beamforming matrix.

For each $k,j$, let a function $\phi_{kj}:\C^{M \times K} \rightarrow [0,+\infty)$ be defined by
$\phi_{kj}(\bm{W}) = |\bm h_k^{\sf H}\bm w_j|^2$ for all $\bm{W}=   [\bm{w}_1, \cdots, \bm{w}_K]  \in \C^{M \times K} $.
Note that
\begin{align*}
\phi_{kj}(\bm{W})
&= \big( \re(\bm{h}_k)^{\sf T} \re(\bm{w}_j) - \im(\bm{h}_k)^{\sf T} \im(\bm{w}_j) \big)^2\\
&+ \big( \im(\bm{h}_k)^{\sf T} \re(\bm{w}_j) + \re(\bm{h}_k)^{\sf T} \im(\bm{w}_j) \big)^2 \geq 0.
\end{align*} 
The system sum-rate of  a given variables $\bm{W}$ can thus be rewritten  as
\begin{equation}\label{eq:RW_phi}
R(\bm{W}) = \sum_{k=1}^{K}\log\bigg(1+ \frac{ \phi_{kk}(\bm{W})}{\sum_{j\ne k} \phi_{kj}(\bm{W})+\sigma^2  }  \Bigg).
\end{equation}
That is, $R$ can be viewed as a function from $\C^{M \times K}$ to $[0,+\infty)$, as shown in Fig. \ref{fig:R}.

Let $V$ be a nonempty open subset of $\R^n$, $f:V \rightarrow \R^m$, and $p\in \N$.
Recall that a function $f$ is said to be $\mathcal{C}^p$ on $V$ if each partial derivative of $f$ of order $k \leq p$ exists and is continuous on $V$. 
$f$ is  said to be $\mathcal{C}^\infty$ on $V$ if $f$ is $\mathcal{C}^p$ on $V$ for all $p\in \N$.
In other words, a $\mathcal{C}^\infty$-mapping is a function that is differentiable for all degrees of differentiation.

\begin{figure}[!t]
        \centering
        \includegraphics[scale=0.6]{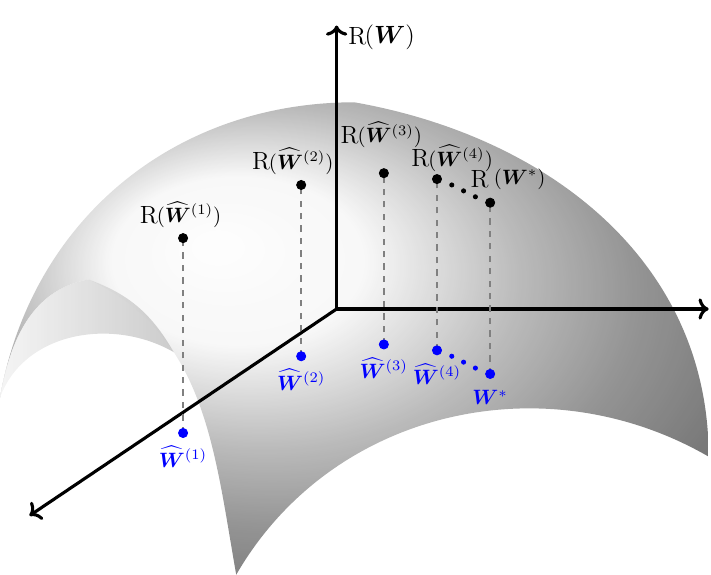}
        \caption{The system sum-rate $R(\bm{W})$.}
        \label{fig:R}
\end{figure}

\begin{lemma}\label{lemma:C-infinity}
The system sum-rate $R$  is a $\mathcal{C}^\infty$-mapping on $\R^{2M \times K}$.
\end{lemma}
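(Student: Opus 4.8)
The plan is to build $R$ up from elementary smooth pieces using the standard closure properties of the class $\mathcal{C}^\infty$. First I would observe that, once $\bm{W}$ is identified with the point $(\re(\bm{W}), \im(\bm{W})) \in \R^{2M\times K}$, the expression for $\phi_{kj}$ displayed just before the statement shows that each $\phi_{kj}$ is a real polynomial in the $2MK$ coordinates of $\bm{W}$, with coefficients determined by the fixed entries of $\bm{h}_k$. Since any polynomial is $\mathcal{C}^\infty$ on all of $\R^{2M\times K}$, every $\phi_{kj}$ is $\mathcal{C}^\infty$.

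Next I would control the denominator. For each $k$ set $D_k(\bm{W}) := \sum_{j\ne k}\phi_{kj}(\bm{W}) + \sigma^2$. As a finite sum of $\mathcal{C}^\infty$ functions plus a constant, $D_k$ is $\mathcal{C}^\infty$; and because every $\phi_{kj} \geq 0$ together with $\sigma^2 > 0$, we have $D_k(\bm{W}) \geq \sigma^2 > 0$ for all $\bm{W}$. Thus $D_k$ is a smooth function that never vanishes, so the quotient $\phi_{kk}/D_k$ is $\mathcal{C}^\infty$ on $\R^{2M\times K}$, since the reciprocal of a nowhere-zero smooth function is smooth and products of smooth functions are smooth.

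Finally I would handle the logarithm by composition. Because $\phi_{kk}\geq 0$ and $D_k>0$, the argument $g_k(\bm{W}) := 1 + \phi_{kk}(\bm{W})/D_k(\bm{W})$ satisfies $g_k \geq 1$, so it takes values in $(0,+\infty)$, the set on which $\log$ is $\mathcal{C}^\infty$. Since the composition of $\mathcal{C}^\infty$ maps is $\mathcal{C}^\infty$, each summand $\log(g_k)$ is $\mathcal{C}^\infty$, and $R = \sum_{k=1}^K \log(g_k)$ is a finite sum of $\mathcal{C}^\infty$ functions, hence $\mathcal{C}^\infty$ on $\R^{2M\times K}$.

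The argument is essentially bookkeeping in the algebra of smooth functions, and there is no genuine obstacle. The only point that needs care is well-definedness: the quotient and the logarithm are only smooth away from their singular sets, so the crux is the uniform lower bound $D_k \geq \sigma^2 > 0$, which keeps both the denominator and the log-argument bounded away from $0$ everywhere on $\R^{2M\times K}$. That bound rests on the two sign facts $\phi_{kj}\geq 0$ and $\sigma^2>0$, both recorded above, and it is what makes the composition licit on the whole domain rather than merely on an open subset.
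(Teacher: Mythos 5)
Your proof is correct and follows essentially the same route as the paper's: identify $\bm{W}$ with real coordinates, note each $\phi_{kj}$ is a polynomial (hence $\mathcal{C}^\infty$), and conclude by closure of $\mathcal{C}^\infty$ under quotients and composition with $\log$. In fact you are slightly more careful than the paper, which only asserts smoothness ``provided $\sum_{j\ne k}\phi_{kj}(\bm{W})+\sigma^2 \neq 0$'' without verifying it, whereas you explicitly supply the uniform bound $D_k \geq \sigma^2 > 0$ from $\phi_{kj}\geq 0$ and $\sigma^2>0$ that makes the conclusion hold on all of $\R^{2M\times K}$.
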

\begin{proof}
Note that the complex variables $\bm{W} \in \C^{M \times K}$ can be considered as real variables $\bm{M} \in \R^{2M \times K}$.
We use $\bm{M}$ and $\bm{W}$ interchangeably.
Let $\bm{M} \in \R^{2M \times K}$ be real variables.
Then it is easy to check that 
 that $\phi_{kj}$ is a multivariate polynomial in $\R[\bm{M}]$, i.e., the ring of polynomials with real coefficients over variables $\bm{M}$.
Thus $\phi_{kj}$ is $C^\infty$-mapping on $\R^{2M \times 2K}$.
Since the logarithm function are $C^\infty$-mapping,
the function $R$ is a $C^\infty$-mapping on $\R^{2M \times 2K}$, provided
$\sum_{j\ne k} \phi_{kj}(\bm{W})+\sigma^2 \neq 0$.
Let $\bm{H} \in  \C^{K\times M} $ be a given channel matrix. Considering the beamforming matrix $\bm{M} \cong \bm{W}$  as real variables in $ \R^{2M \times K}( \cong  \C^{M \times K}) $, the system sum-rate in \eqref{eq:R} can be considered as a function $R:\R^{2M \times K} \longrightarrow [0,+\infty)$ defined by
\begin{equation*}
\bm{W}=[\bm{w}_1,\ldots,\bm{w}_K] \cong
\begin{bmatrix}
\re(\bm{w}_1) & \cdots & \re(\bm{w}_K) \\
\im(\bm{w}_1) & \cdots & \im(\bm{w}_K)   
\end{bmatrix}
 \longmapsto    R(\bm{W}).
\end{equation*}

In other words, $\phi_{kj}$ can be considered as a function from $\R^{2M \times 2K} \longrightarrow [0,+\infty)$.
Moreover, it is easy to check that $\phi_{kj}$ is a multivariate polynomial in $\R[\bm{M}]$, which is $C^\infty$-mapping on $\R^{2M \times 2K}$.
Since it can be rewritten  as
\begin{equation*}
R(\bm{W}) = \sum_{k=1}^{K}\log\bigg(1+ \frac{ \phi_{kk}(\bm{W})}{\sum_{j\ne k} \big(\phi_{kj}(\bm{W})+\sigma^2 \big) }  \Bigg),
\end{equation*}
and the logarithm function is $C^\infty$-mapping,
the function $R$ is a $C^\infty$-mapping on $\R^{2M \times 2K}$, provided
$\sum_{j\ne k} \big(\phi_{kj}(\bm{W})+\sigma^2 \big) \neq 0$.
\end{proof}

Denote the true solution of the regularized RZF problem \eqref{eq:rzf_sol} in the complex version as $\bm{W}^*$.
By Lemma \ref{lemma:C-infinity}, $R$ is continuous.
By Theorem \ref{thm:main1}, each entry of the approximation converges to the entry of the true solution, respectively, i.e.,
\begin{align*}
& \re( \widehat{\bm{W}}^{(t)}_{ij}) \longrightarrow \re(\bm{W}^*_{ij}) \quad \text{as} \quad t \longrightarrow \infty,\\
& \im( \widehat{\bm{W}}^{(t)}_{ij}) \longrightarrow \im(\bm{W}^*_{ij}) \quad \text{as} \quad t \longrightarrow \infty.
\end{align*}
Note that the image of a convergent sequence under a continuous function converges to the image of limit.
Thus, the following holds.
\begin{proposition}
Assume that for some constant $0 < \varepsilon <1$, the sketching matrix $\bm{S} \in \R^{2M\times L}$ satisfies the constraint \eqref{eq:contraint1}. 
Let $t$ be the number of iterations.
Then, the system sum-rate of the approximation $R(\widehat{\bm{W}}^{(t)})$
converges to the system sum-rate of the true solution $R(\bm{W}^*)$ as the number of iterations increases.
That is,
\begin{equation}
R\big( \widehat{\bm{W}}^{(t)} \big) \longrightarrow R\big( \bm{W}^* \big) \quad \text{as} \quad t \longrightarrow \infty.
\end{equation}
\end{proposition}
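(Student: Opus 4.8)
The plan is to combine the quantitative convergence of the beamformers furnished by Theorem~\ref{thm:main1} with the continuity of the sum-rate map established in Lemma~\ref{lemma:C-infinity}, and then invoke sequential continuity. The whole statement is a convergence-of-images argument, so the strategy is to verify its two hypotheses and apply the standard fact that continuous maps send convergent sequences to the image of the limit.

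First I would use Theorem~\ref{thm:main1} to show that the sequence of approximate beamformers converges to $\bm{W}^*$. Since the constraint \eqref{eq:contraint1} is assumed with $0<\varepsilon<1$, the bound $\|\widehat{\bm{W}}^{(t)}-\bm{W}^*\|_F \le \varepsilon^t\|\bm{W}^*\|_F$ forces $\varepsilon^t\to 0$ and hence $\|\widehat{\bm{W}}^{(t)}-\bm{W}^*\|_F\to 0$ as $t\to\infty$. Identifying each complex matrix $\bm{W}\in\C^{M\times K}$ with the real matrix $\bm{M}\in\R^{2M\times K}$ obtained by stacking its real and imaginary parts, the Frobenius norm of the complex difference coincides with the Euclidean norm of the real difference on $\R^{2M\times K}$ (this identity is already used in the proof of Theorem~\ref{thm:main1}), and the latter is equivalent to entrywise convergence of both $\re(\widehat{\bm{W}}^{(t)}_{ij})$ and $\im(\widehat{\bm{W}}^{(t)}_{ij})$ to the corresponding entries of $\bm{W}^*$.

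Next I would record that the sum-rate map $R$ is continuous on all of $\R^{2M\times K}$. This is immediate from Lemma~\ref{lemma:C-infinity}, since every $\mathcal{C}^\infty$-mapping is in particular continuous. The only point requiring a moment's care is that continuity must hold on the entire space, i.e.\ the denominator inside each logarithm must never vanish; but $\sum_{j\ne k}\phi_{kj}(\bm{W})+\sigma^2\ge\sigma^2>0$ because each $\phi_{kj}\ge 0$ and $\sigma^2>0$, so $R$ is well-defined and smooth everywhere, and in particular at the limit point $\bm{W}^*$.

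Finally I would conclude by sequential continuity: a continuous function maps a convergent sequence to a sequence converging to the image of the limit. Applying this to $R$ and the sequence $\widehat{\bm{W}}^{(t)}\to\bm{W}^*$ yields $R(\widehat{\bm{W}}^{(t)})\to R(\bm{W}^*)$ as $t\to\infty$, which is exactly the claim. There is essentially no hard step here, since all the real content has been front-loaded into Theorem~\ref{thm:main1} (quantitative, geometric convergence of the iterates) and Lemma~\ref{lemma:C-infinity} (smoothness, hence continuity, of the rate function). The only mild obstacle is the bookkeeping between the complex beamforming matrix and its real representation, together with the observation that the strictly positive noise variance $\sigma^2$ keeps the denominators bounded away from zero and thus keeps $R$ continuous on the whole domain.
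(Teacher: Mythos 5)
Your proposal is correct and follows essentially the same route as the paper: Theorem~\ref{thm:main1} with $0<\varepsilon<1$ gives $\|\widehat{\bm{W}}^{(t)}-\bm{W}^*\|_F\le\varepsilon^t\|\bm{W}^*\|_F\to 0$ (hence entrywise convergence of real and imaginary parts), Lemma~\ref{lemma:C-infinity} gives continuity of $R$, and sequential continuity concludes. Your added observation that $\sum_{j\ne k}\phi_{kj}(\bm{W})+\sigma^2\ge\sigma^2>0$, so $R$ is continuous on the whole space, is a small but welcome tightening of the paper's caveat \emph{provided the denominator is nonzero}.
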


The next theorem is our key result. 
It shows that the error of the system sum-rate is bounded by the error of an approximation of beamforming matrix.
Using this result, the rate of convergence for the system sum-rate of an approximation can be obtained.
\begin{theorem}\label{thm:approx-R}
        Let $\bm{H}$ be a given channel matrix, and 
        let $\widehat{\bm{W}}$ (resp. $\bm{W}^*$) be the approximation (resp. true) RZF beamforming matrix.
        Then it holds that 
        \begin{align*}
        & \Big| R\big( \widehat{\bm{W}} \big) - R\big( \bm{W}^* \big)  \Big| \\
        & \leq   C \big\| \bm{H} \big\|_F^2 \Big( \big\|\widehat{\bm{W}} - \bm{W}^* \big\|_F^2     + 2\big\|\widehat{\bm{W}} - \bm{W}^* \big\|_F    \big\| \bm{W}^* \big\|_F  \Big),
        \end{align*}
        where $ C$ is constant independent to $\widehat{\bm{W}}$.
\end{theorem}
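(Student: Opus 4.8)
The plan is to peel off the logarithms with a Lipschitz estimate, reduce the resulting difference of SINRs to a sum of differences of the scalar quantities $\phi_{kj}$, and finally control each $\phi_{kj}$-difference by an elementary difference-of-squares argument. First, since ${\sf SINR}_k(\bm{W})\ge 0$ for every $\bm{W}$ and the map $x\mapsto \log(1+x)$ has derivative $1/(1+x)\in(0,1]$ on $[0,\infty)$, it is $1$-Lipschitz there, so by the mean value theorem
\[
\big| R(\widehat{\bm{W}}) - R(\bm{W}^*) \big| \le \sum_{k=1}^K \big| {\sf SINR}_k(\widehat{\bm{W}}) - {\sf SINR}_k(\bm{W}^*) \big|.
\]
This removes the logarithms and leaves me to estimate each SINR difference.

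Next I would write ${\sf SINR}_k = N_k/D_k$ with $N_k=\phi_{kk}$ and $D_k = \sum_{j\ne k}\phi_{kj}+\sigma^2 \ge \sigma^2 >0$. The crucial step, and what I expect to be the main obstacle, is controlling the difference of this \emph{ratio}, because the numerators themselves need not be small; a naive bound would produce an uncontrolled factor. I would instead use the exact algebraic splitting
\[
\frac{N_1}{D_1} - \frac{N_2}{D_2} = \frac{N_1 - N_2}{D_1} + \frac{N_2}{D_2}\cdot\frac{D_2 - D_1}{D_1},
\]
where subscript $1$ (resp. $2$) denotes evaluation at $\widehat{\bm{W}}$ (resp. $\bm{W}^*$). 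Using $D_1\ge\sigma^2$ and the key observation that $N_2/D_2 = {\sf SINR}_k(\bm{W}^*)$ is a \emph{fixed} number depending only on the channel and $\bm{W}^*$ (not on $\widehat{\bm{W}}$), together with $|D_2-D_1|\le \sum_{j\ne k}|\phi_{kj}(\widehat{\bm{W}})-\phi_{kj}(\bm{W}^*)|$, I would obtain
\[
\big| {\sf SINR}_k(\widehat{\bm{W}}) - {\sf SINR}_k(\bm{W}^*) \big| \le c_k \sum_{j=1}^K \big| \phi_{kj}(\widehat{\bm{W}}) - \phi_{kj}(\bm{W}^*) \big|,
\]
with $c_k = \sigma^{-2}\max\{1,\,{\sf SINR}_k(\bm{W}^*)\}$ absorbing the denominators. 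This trades the analytically awkward ratio for a clean sum of $\phi_{kj}$-differences at the price of a constant independent of $\widehat{\bm{W}}$.

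Finally, writing $a=\bm h_k^{\sf H}\widehat{\bm w}_j$ and $b=\bm h_k^{\sf H}\bm w_j^*$, I would bound each term through the factorization $\big||a|^2-|b|^2\big| = \big||a|-|b|\big|\,(|a|+|b|)$, the reverse triangle inequality $\big||a|-|b|\big|\le|a-b|=|\bm h_k^{\sf H}(\widehat{\bm w}_j-\bm w_j^*)|\le \|\bm h_k\|_2\|\widehat{\bm w}_j-\bm w_j^*\|_2$, and $|a|+|b|\le \|\bm h_k\|_2(\|\widehat{\bm w}_j-\bm w_j^*\|_2+2\|\bm w_j^*\|_2)$, which yields
\[
\big| \phi_{kj}(\widehat{\bm{W}}) - \phi_{kj}(\bm{W}^*) \big| \le \|\bm h_k\|_2^2\big( \|\widehat{\bm w}_j-\bm w_j^*\|_2^2 + 2\|\widehat{\bm w}_j-\bm w_j^*\|_2\|\bm w_j^*\|_2 \big).
\]
Summing over $j$ and applying Cauchy--Schwarz to the cross term gives $\sum_j|\phi_{kj}(\widehat{\bm{W}})-\phi_{kj}(\bm{W}^*)|\le \|\bm h_k\|_2^2(\|\widehat{\bm W}-\bm W^*\|_F^2+2\|\widehat{\bm W}-\bm W^*\|_F\|\bm W^*\|_F)$. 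Summing over $k$, setting $C=\max_k c_k$ (independent of $\widehat{\bm{W}}$), and using $\sum_k\|\bm h_k\|_2^2=\|\bm H\|_F^2$ then delivers the claimed estimate. The only places requiring care are the positivity $D_k\ge\sigma^2$ that makes the ratio splitting effective, and the correct pairing in the Cauchy--Schwarz step so that the cross term produces exactly $\|\widehat{\bm W}-\bm W^*\|_F\|\bm W^*\|_F$ rather than a looser product.
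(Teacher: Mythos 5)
Your proof is correct, and at the top level it follows the same skeleton as the paper's Appendix B: the $1$-Lipschitz estimate for $\log(1+x)$ (the paper's Lemma~\ref{lem:log}), a perturbation bound for the SINR ratio, the quadratic-form estimate $|\phi_{kj}(\widehat{\bm W})-\phi_{kj}(\bm W^*)|\le\|\bm h_k\|_2^2\big(\|\widehat{\bm w}_j-\bm w_j^*\|_2^2+2\|\widehat{\bm w}_j-\bm w_j^*\|_2\|\bm w_j^*\|_2\big)$ (the paper's Lemma~\ref{lem:phi1}, which you rederive via difference of squares and the reverse triangle inequality instead of the paper's sesquilinear telescoping --- an equivalent variant), and the same Cauchy--Schwarz aggregation with $\sum_k\|\bm h_k\|_2^2=\|\bm H\|_F^2$. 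The one genuinely different --- and in fact more careful --- step is the ratio bound. The paper invokes Lemma~\ref{lem:inequality1} with $x=\sum_{j\ne k}\phi_{kj}(\widehat{\bm W})-\sum_{j\ne k}\phi_{kj}(\bm W^*)$, but that lemma's proof passes from $|b(y-a)-ax|/(|x+b|\,|b|)$ to $(|b(y-a)|+|ax|)/|b|^2$, which requires $|x+b|\ge|b|$, i.e.\ $x\ge 0$; in the application $x$ can be negative (the interference under $\widehat{\bm W}$ may be smaller than under $\bm W^*$), so the paper's constants $1/D_k^*$ and $N_k^*/(D_k^*)^2$ are not fully justified in that regime. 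Your exact splitting $N_1/D_1-N_2/D_2=(N_1-N_2)/D_1+{\sf SINR}_k(\bm W^*)\,(D_2-D_1)/D_1$, combined with the uniform lower bound $D_1\ge\sigma^2$, sidesteps this entirely and gives $c_k=\sigma^{-2}\max\{1,{\sf SINR}_k(\bm W^*)\}$, which is still independent of $\widehat{\bm W}$ as the theorem requires; the price is a possibly larger (but equally admissible) constant, and the gain is an argument valid for either sign of the interference perturbation.
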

\begin{proof}
        See Appendix \ref{appendix_B}.
\end{proof}

Suppose a sequence $\{\beta_t\}_{t=1}^\infty$ converges to zero, and $\{ \alpha_t \}_{t=1}^\infty$ converges to a number $\alpha$. 
Recall that  $\{ \alpha_t \}_{t=1}^\infty$ converges  to $\alpha$ with {\emph{rate of convergence}} $\mathcal{O}(\beta_t)$ if
a positive constant $K$ exists with 
$ | \alpha_t  -\alpha | \leq K | \beta_t |$
for sufficiently large $t$.

The following shows that if an approximation sequence $\big\{\widehat{\bm{W}}^{(t)}  \big\}_{t=1}^{\infty}$ converges to the true beamforming matrix $\bm{W}^*$ with the rate of convergence  $\mathcal{O}(\beta_t)$, then $\big\{R( \widehat{\bm{W}}^{(t)} ) \big\}_{t=1}^{\infty}$ converges to  $R\big( \bm{W}^* \big)$ with  the same rate of convergence $\mathcal{O}(\beta_t)$.
\begin{theorem}\label{thm:general-convergence-rate-R}
Let $\eta \geq 0$. 
        If an approximation sequence $\{\widehat{\bm{W}}^{(t)} \}_{t=1}^{\infty}$ converges to $\bm{W}^*$ such that 
$       \|\widehat{\bm{W}}^{(t)} - \bm{W}^*\|_F \leq   |\beta_t|   (\| \bm{W}^*\|_F + \eta),$
        then 
        \begin{equation*}
        \Big| R\big( \widehat{\bm{W}}^{(t)} \big) - R\big( \bm{W}^* \big)  \Big|
        \leq  3C    |\beta_t  |     \big\| \bm{H} \big\|^2_F  (\big\| \bm{W}^{*} \big\|_F + \eta  )^2,
        \end{equation*}
        provided sufficiently large $t$. 
\end{theorem}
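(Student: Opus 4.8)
The plan is to invoke Theorem~\ref{thm:approx-R} directly and then feed in the hypothesized decay of the approximation error. To streamline the bookkeeping, write $D_t := \|\widehat{\bm{W}}^{(t)} - \bm{W}^*\|_F$ and $B := \|\bm{W}^*\|_F + \eta$, so that the hypothesis reads simply $D_t \leq |\beta_t|\, B$. Applying Theorem~\ref{thm:approx-R} to $\widehat{\bm{W}}^{(t)}$ gives
\begin{equation*}
\Big| R\big(\widehat{\bm{W}}^{(t)}\big) - R\big(\bm{W}^*\big) \Big| \leq C\|\bm{H}\|_F^2\big(D_t^2 + 2 D_t \|\bm{W}^*\|_F\big),
\end{equation*}
and the entire task reduces to controlling the two terms in the parenthesis.

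First I would bound each term using $D_t \leq |\beta_t|\, B$ together with the elementary observation that $\|\bm{W}^*\|_F \leq B$, which holds because $\eta \geq 0$. This yields $D_t^2 \leq |\beta_t|^2 B^2$ and $2 D_t \|\bm{W}^*\|_F \leq 2|\beta_t|\, B^2$, so the bracketed quantity is at most $|\beta_t|\, B^2 (|\beta_t| + 2)$.

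The final step is to absorb the quadratic-in-$\beta_t$ contribution into the dominant linear one. Since $\{\beta_t\}_{t=1}^\infty$ converges to zero, there exists $T$ with $|\beta_t| \leq 1$ for all $t \geq T$; for such $t$ we have $|\beta_t| + 2 \leq 3$, hence $|\beta_t|\, B^2(|\beta_t| + 2) \leq 3 |\beta_t|\, B^2$. Substituting back recovers
\begin{equation*}
\Big| R\big(\widehat{\bm{W}}^{(t)}\big) - R\big(\bm{W}^*\big) \Big| \leq 3C |\beta_t|\, \|\bm{H}\|_F^2 (\|\bm{W}^*\|_F + \eta)^2
\end{equation*}
for all sufficiently large $t$, which is exactly the claimed bound. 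I expect no substantial obstacle: the argument is a routine chain of inequalities built on Theorem~\ref{thm:approx-R}, and the only subtlety is the ``sufficiently large $t$'' qualifier, which enters precisely when the higher-order $|\beta_t|^2$ term is discarded in favor of the linear term. The combinatorial coefficient $3$ in the statement is the manifestation of this step, arising from the bound $|\beta_t| + 2 \leq 3$.
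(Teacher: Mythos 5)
Your proposal is correct and follows essentially the same route as the paper's own proof: invoke Theorem~\ref{thm:approx-R}, substitute the hypothesis $\|\widehat{\bm{W}}^{(t)} - \bm{W}^*\|_F \leq |\beta_t|(\|\bm{W}^*\|_F + \eta)$, use $\eta \geq 0$ to replace $\|\bm{W}^*\|_F$ by $\|\bm{W}^*\|_F + \eta$, and absorb the $|\beta_t|^2$ term via $|\beta_t| \leq 1$ for large $t$, which is exactly where the constant $3$ arises in both arguments. The only cosmetic difference is your shorthand $D_t$, $B$, which merely makes explicit the substitution the paper performs implicitly.
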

\begin{proof}
        Since $\beta_t$ converges to 0, there exists $T\in \N$ such that 
        $t\geq T$ implies $| \beta_t | <1$.
        By Theorem \ref{thm:approx-R}
        it follows that 
        \begin{align*}
        \big| R\big( \widehat{\bm{W}} \big) - R\big( \bm{W}^* \big)  \big|  
        &\leq C \big\| \bm{H} \big\|^2_F \bigg(  |\beta_t |^2 \big(\big\| \bm{W}^{*} \big\|_F +  \eta \big)^2 \\
        &\quad + 2  |\beta_t | \big(  \big\| \bm{W}^{*} \big\|_F + \eta \big) \big\|\bm{W}^* \big\|_F \bigg).
                \end{align*}
Since $\eta \geq 0$, we have    
        \begin{align*}
  &|\beta_t |^2 \big(\big\| \bm{W}^{*} \big\|_F + \eta \big)^2 + 2  |\beta_t | \big(  \big\| \bm{W}^{*} \big\|_F + \eta \big) \big\|\bm{W}^* \big\|_F  \\
        &\leq  |\beta_t |^2 \big(\big\| \bm{W}^{*} \big\|_F + \eta \big)^2 + 2  |\beta_t | \big(  \big\| \bm{W}^{*} \big\|_F + \eta \big)^2  \\
        &\leq  3|\beta_t | \big(\big\| \bm{W}^{*} \big\|_F + \eta \big)^2,
        \end{align*}
        provided $t \geq T$.
\end{proof}

Using Theorem \ref{thm:general-convergence-rate-R}, one can find the rate of convergence for the system sum-rate of the approximation sequence generated by Algorithm \ref{main:Algorithm}.
\begin{corollary}\label{cor:convergence-rate-R}
\begin{itemize}
    \item[(i)]  Assume that for $0 < \varepsilon <1$, the sketching matrix $\bm{S}$ satisfies the constraint \eqref{eq:contraint1}. Then, after $t$ number of iterations, the approximation $\widehat{\bm{W}}^{(t)}$ returned by Algorithm \ref{main:Algorithm} holds
        \begin{equation*}
        \Big| R\big( \widehat{\bm{W}}^{(t)} \big) - R\big( \bm{W}^* \big)  \Big|
        \leq  3C \varepsilon^{t} \big\| \bm{H} \big\|_F^2  \big\| \bm{W}^* \big\|_F^2.
        \end{equation*}
\item[(ii)] 
        Assume that for $0 < \varepsilon <1$, the sketching matrix $\bm{S}$ satisfies the constraint \eqref{eq:sketching_matrix4}. Then, after $t$ number of iterations, the approximation $\widehat{\bm{W}}^{(t)}$ returned by Algorithm \ref{main:Algorithm} holds
        \begin{align*}
        \Big| R\big( \widehat{\bm{W}}^{(t)} \big) - R\big( \bm{W}^* \big)  \Big| 
        &\leq  3C \varepsilon^{t} \big\| \bm{H} \big\|_F^2  \bigg( \big\| \bm{W}^* \big\|_F  \\
        &\quad + \frac{1}{\sqrt{2\lambda}} \| \bm{U}^{\sf T}_{2K,\perp}\bm{\Lambda} \|_2 \bigg)^2.
        \end{align*}
\end{itemize}
Here, $\bm{W}^*$ is the true value in \eqref{eq:bfd}.
\end{corollary}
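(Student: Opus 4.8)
The plan is to read Corollary~\ref{cor:convergence-rate-R} as a direct instantiation of Theorem~\ref{thm:general-convergence-rate-R}: the two parts correspond to feeding the two convergence-rate bounds of Theorem~\ref{thm:main1} and Theorem~\ref{thm:main2} into the generic rate-transfer result. No new machinery is required; the whole argument is a matching of hypotheses, plus one elementary inequality for part~(ii).

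For part~(i), I would take $\beta_t = \varepsilon^t$ and $\eta = 0$. Under constraint~\eqref{eq:contraint1}, Theorem~\ref{thm:main1} gives $\|\widehat{\bm{W}}^{(t)} - \bm{W}^*\|_F \leq \varepsilon^t \|\bm{W}^*\|_F$, which is exactly the hypothesis $\|\widehat{\bm{W}}^{(t)} - \bm{W}^*\|_F \leq |\beta_t|(\|\bm{W}^*\|_F + \eta)$ of Theorem~\ref{thm:general-convergence-rate-R}. Substituting $\beta_t = \varepsilon^t$ and $\eta = 0$ into the conclusion of that theorem yields the claimed bound $3C\varepsilon^{t}\|\bm{H}\|_F^2\|\bm{W}^*\|_F^2$ immediately.

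For part~(ii), the only extra step is an algebraic rearrangement bringing the bound of Theorem~\ref{thm:main2} into the additive form required by Theorem~\ref{thm:general-convergence-rate-R}. Applying the elementary inequality $\sqrt{a^2+b^2}\leq a+b$ (valid for $a,b\geq 0$) with $a=\|\bm{W}^*\|_F$ and $b=\tfrac{1}{\sqrt{2\lambda}}\|\bm{U}^{\sf T}_{\xi,\perp}\bm{\Lambda}\|_F$, the bound of Theorem~\ref{thm:main2} becomes
\begin{equation*}
\|\widehat{\bm{W}}^{(t)} - \bm{W}^*\|_F \leq \frac{\varepsilon^t}{\sqrt{2}}\Big(\|\bm{W}^*\|_F + \tfrac{1}{\sqrt{2\lambda}}\|\bm{U}^{\sf T}_{\xi,\perp}\bm{\Lambda}\|_F\Big).
\end{equation*}
I would then invoke Theorem~\ref{thm:general-convergence-rate-R} with $\beta_t = \varepsilon^t/\sqrt{2}$ and $\eta = \tfrac{1}{\sqrt{2\lambda}}\|\bm{U}^{\sf T}_{\xi,\perp}\bm{\Lambda}\|_F$, obtaining a bound of the form $3C(\varepsilon^t/\sqrt{2})\|\bm{H}\|_F^2(\|\bm{W}^*\|_F+\eta)^2$. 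Finally, since $1/\sqrt{2}<1$, I would absorb that factor to recover the stated looser bound with $3C\varepsilon^t$ in front.

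The argument carries no genuine obstacle. The only items needing a word of care are verifying $\eta\geq 0$ so that Theorem~\ref{thm:general-convergence-rate-R} is applicable (trivial, since $\eta$ is a scaled norm), and noting that the inequality holds for sufficiently large $t$, a caveat inherited from the threshold $|\beta_t|<1$ used internally in Theorem~\ref{thm:general-convergence-rate-R}. I would also remark that the constant $C=C(\bm{H})$ is the same one produced by Theorem~\ref{thm:approx-R} and is independent of the iteration count $t$.
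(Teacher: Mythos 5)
Your proposal is correct and matches the paper's own proof, which likewise instantiates Theorem~\ref{thm:general-convergence-rate-R} with $\eta=0$ (via Theorem~\ref{thm:main1}) for part~(i) and with $\eta=\tfrac{1}{\sqrt{2\lambda}}\|\bm{U}^{\sf T}_{\xi,\perp}\bm{\Lambda}\|_F$ (via Theorem~\ref{thm:main2}) for part~(ii); indeed you supply the two steps the paper leaves implicit, namely converting Theorem~\ref{thm:main2}'s bound into the required additive form via $\sqrt{a^2+b^2}\leq a+b$ and absorbing the resulting factor $1/\sqrt{2}$ into $3C\varepsilon^t$. Note only that your (correct) use of $\bm{U}_{\xi,\perp}$ and the Frobenius norm silently repairs evident typos in the printed corollary and its proof, which write $\bm{U}_{2K,\perp}$ (an empty matrix under the paper's own definition), $\|\cdot\|_2$, and cite Theorem~\ref{thm:main1} instead of Theorem~\ref{thm:main2} in part~(ii).
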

\begin{proof}
(i) It holds from Theorem \ref{thm:general-convergence-rate-R} with $\eta=0$ and Theorem \ref{thm:main1}.
(ii) It holds from Theorem \ref{thm:general-convergence-rate-R} with $\eta=\frac{1}{\sqrt{2\lambda}} \| \bm{U}^{\sf T}_{2K,\perp}\bm{\Lambda} \|_2$ and Theorem \ref{thm:main1}.
\end{proof}

%
%
%
%

%
%

%
%
%

\section{Simulations}
\label{sec:simulations}

In this section, we simulate the performance of the proposed randomized sketching based beamformer in Algorithm \ref{main:Algorithm}. We consider the following channel model between the BS and the $ k $-th user:
        \begin{equation}
        \bm h_k=10^{-\tilde{L}(d_{k})/20}\sqrt{\varphi_ks_k}\bm f_k,
        \end{equation}
where $ \tilde{L}(d_k) $ is the path-loss at distance $ d_k $, $ s_k $ is the shadowing coefficients, $ \varphi_k $ is the antenna gain, and $ \bm f_k $ is the small fading coefficient. We use the standard cellular network parameters as shown in Table I \cite{shi2014group}.
We consider a single cell massive MIMO system with $ M=5000 $ antennas at the BS and $ K=50 $ single-antenna  users uniformly and independently distributed in the square region $ [-5000,5000]\times[5000,5000] $ meters. 
\begin{figure}[tp!]
\centering
\includegraphics[width=9cm,height=6.5cm]{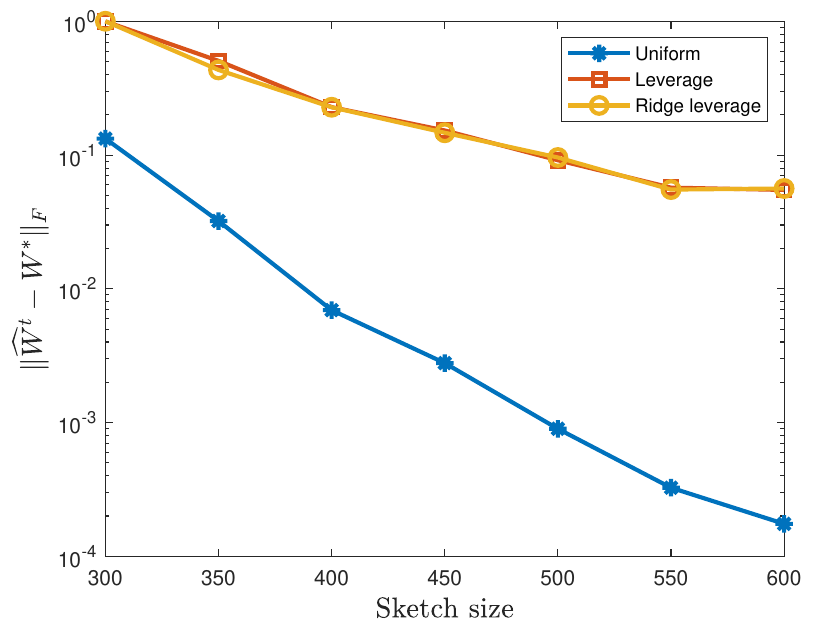}
\caption{Solution error vs. sketch size.}
\label{sim1}
\end{figure}

 First, we compare three different sampling-and-rescaling methods whose random matrices are generated by Algorithm \ref{Algorithm2} with the following sampling probabilities $\{p_i\}_{i=1}^{2M}$: 
        \begin{itemize}
                \item ({\bf Uniformly at random})
                Calculate 
                $p_i=\frac{1}{2M}$
                for $i=1,\dots, 2M$.
                \item ({\bf Leverage scores})
                Calculate $p_i=\frac{\|\bm{V}_{i*} \|^2_2}{2M}$ for $i=1,\dots,2M$. 
                \item ({\bf Ridge leverage scores}) 
                Calculate $p_i=\frac{\| (\bm{V} \bm{\Sigma}_\lambda)_{i*} \|^2_2}{d_\lambda}$ for $i=1,\dots, 2M$.
        \end{itemize}
        Here, $\bm{V}$ denotes the right singular value of matrix $\bm{Q}$, and 
        ${\bm{\Sigma}}_\lambda$ denotes the diagonal matrix given by \eqref{eq:sum}.
        
        Fig.~\ref{sim1} shows the sum-rate error for three different sampling-and-rescaling methods. We fix the iteration numbers $ t=10 $.
        Each graphs present the average of $200$  replicated runs.
        It clearly illustrates that the  uniformly  at random method achieves better accuracy than the other two sampling-and-rescaling methods. We thus  generate sampling matrix  uniformly  at random in the following simulations.

\begin{figure}[tp!]
        \begin{center}
                \includegraphics[width=9cm,height=6.5cm]{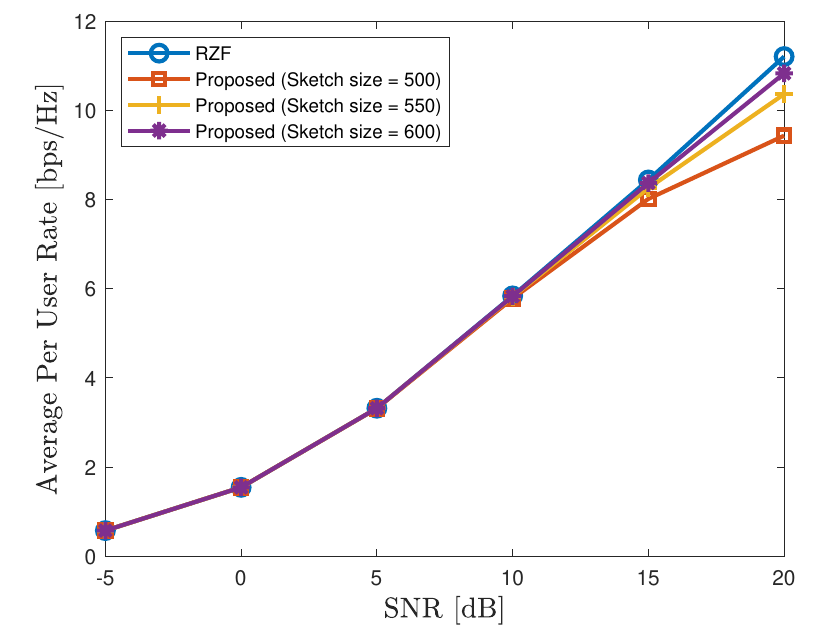}
                \caption{Average per user rate vs. {SNR}.}
                \label{sim5}
        \end{center}
\end{figure}

        We compare the proposed randomized sketching based beamformers under various sketch sizes and different SNR which is defined as the transmit power at the BS over the received noise power at all the users. 
        We generate sketching matrices with different sizes, and terminate Algorithm \ref{main:Algorithm} after $ 10 $ iterations. 
        As shown in Fig.~\ref{sim5}, the randomized sketching based beamformer performs closely to RZF beamforming in terms of the average per user rate as the sketch size increases.

\begin{figure}[tp!]
\begin{center}
\includegraphics[width=9cm,height=6.5cm]{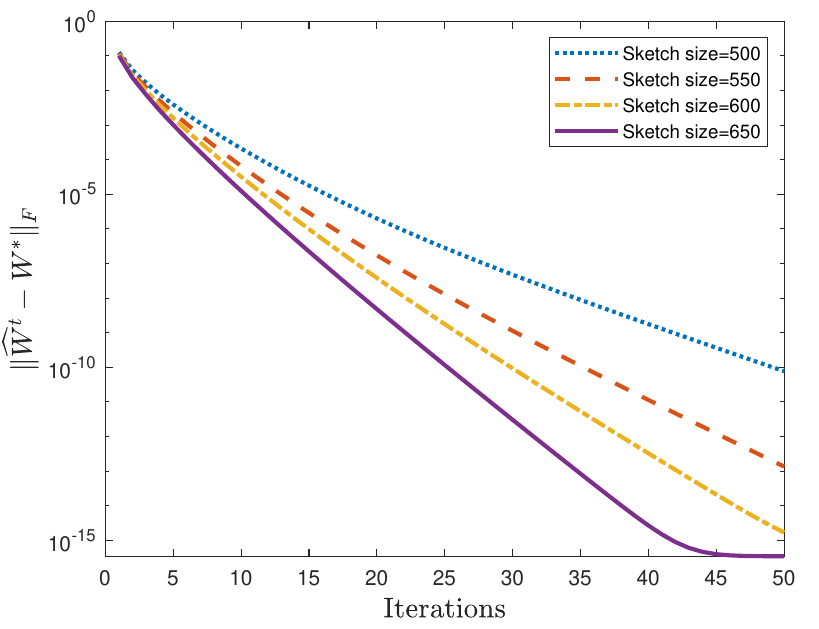}
\caption{Solution error vs. iteration.}
\label{sim2}
\end{center}
\end{figure}
\begin{figure}[tp!]
\begin{center}
\includegraphics[width=9cm,height=6.5cm]{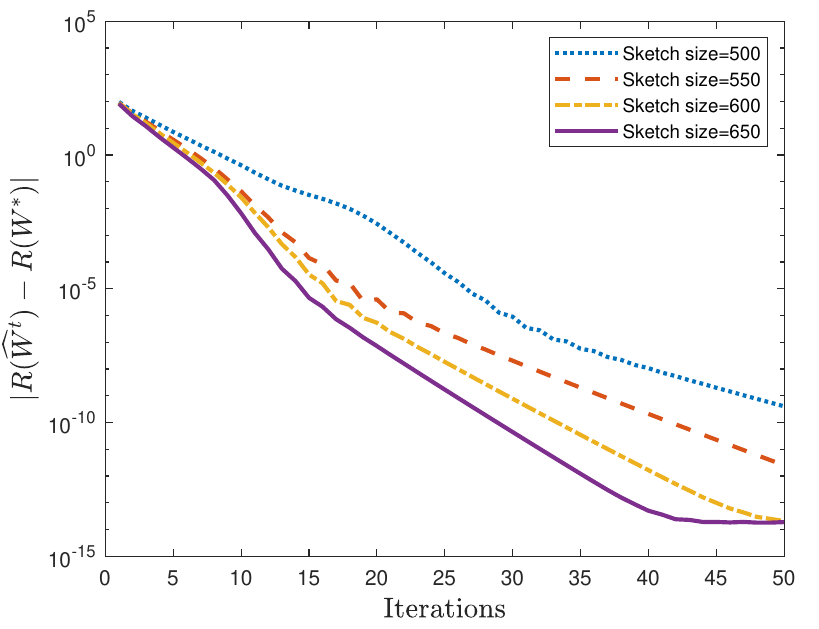}
\caption{Sum-rate error vs. iteration.}
\label{sim3}
\end{center}
\end{figure}

Fig.~\ref{sim2} illustrates that the iterative solution converges to the RZF beamforming matrix at a linear convergence rate as shown in Theorem \ref{thm:main1} by plotting the trend of $\|\widehat{\bm{W}}^{(t)} - \bm{W}^*\|_F$ up to 50 iterations with SNR being $ 5 $. Fig.~\ref{sim3} illustrates that the achievable sum-rate of the randomized beamforming converges to the achievable sum-rate given by RZF beamforming linearly as shown in Corollary \ref{cor:convergence-rate-R}. It demonstrates the trend of $| R\big( \widehat{\bm{W}}^{(t)} \big) - R\big( \bm{W}^* \big)|$ within 50 iterations. It is clear that in Fig.~\ref{sim2} and \ref{sim3} the error decreases fast with the number of iterations, and larger sketch size leads to faster convergence rate.

\begin{figure}[t!]
\begin{center}
\includegraphics[width=9cm,height=6.5cm]{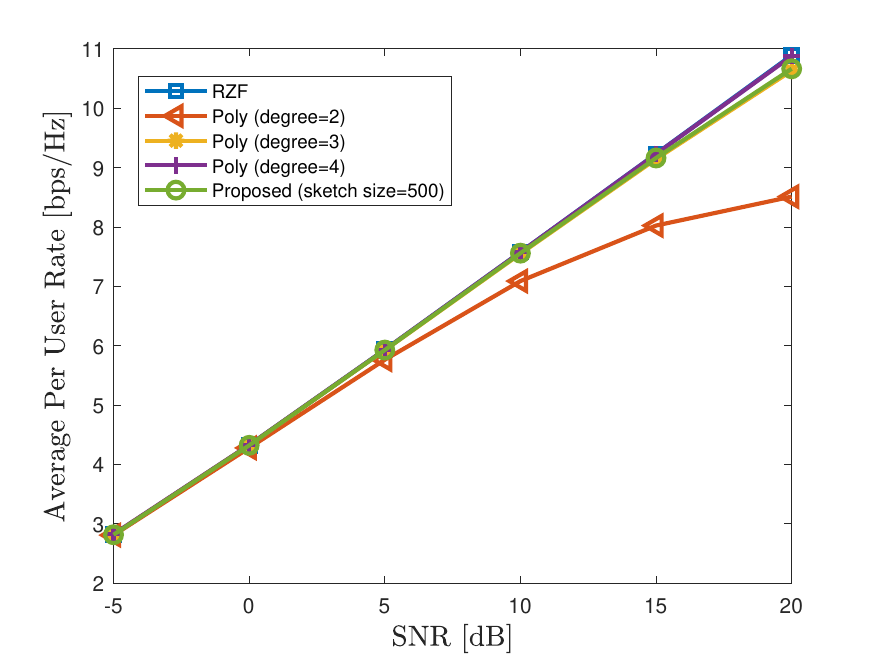}
\caption{Compassion with truncated polynomial expansion method\cite{mueller2016linear}.}
\label{sim6}
\end{center}
\end{figure}

Finally, we compare the proposed randomized sketching method with  the polynomial expansion based method \cite{mueller2016linear} given the channel matrix $\bm H$. Note that the polynomial expansion based method \cite{mueller2016linear} consists of two steps, i.e., seeking the polynomial coefficients based on the channel matrix $\bm H$ and then computing the beamforming matrix using the polynomial coefficients. We follow the simulation settings in  \cite{mueller2016linear} and consider the case $ M=1000$, $K=50$. The sketch size is set to be $ 500 $, and we terminate our proposed sketch method after $ 10 $ iterations. The results averaged for $ 100 $ times are presented in Fig. \ref{sim6}.  For the polynomial expansion based method, we present the results with polynomial expansion degree equals $ 2,3 $ and $ 4 $, respectively.  As can be seen from Fig. \ref{sim6}, our proposed method and  the truncated polynomial expansion method with degree $3$ achieve similar performance as the RZF method. To compute the beamforming matrix, the average running time of our proposed randomized sketching method is $0.0214 s$, while it costs $2.0950s$ via the truncated polynomial expansion method with degree $3$ (see TABLE \ref{table:timecost}). This is because the time complexity is high when we compute the polynomial coefficients. Therefore, given the channel matrix $\bm H$, the proposed sketching method is more efficient compared to the polynomial expansion based method.  

\begin{table}[htb]
        \centering
        \caption{Comparison for time complexity}
        \begin{tabular}{ll}
        \toprule
         Approaches &Time (seconds)\\
         \midrule
         Proposed (sketch size = 500) &0.0214\\
         Poly (degree = 2)  &0.6728\\
         Poly (degree = 3)  &2.0950\\
         Poly (degree = 4)  &5.2871\\  
        \bottomrule
        \end{tabular}
        \label{table:timecost}        
\end{table}
        

%
\section{Conclusion}
\label{sec:conclusion}
We proposed a randomized sketching based RZF beamforming approach to tackle the computational challenges of precoding in massive MIMO systems. This  was achieved by solving the linear system for the matrix inversion via  randomized sketching based on the preconditioned Richard iteration. The computational complexity of our proposed method scales with $ LK^2 $, where $ L\ll 2M $ is the sketching matrix size.  Furthermore, we proved that the proposed algorithm iteratively converges to the RZF beamforming matrix at a linear convergence rate. Also, the achievable sum-rate with the randomized sketching based RZF beamformer linearly converges to the achievable sum-rate with the RZF beamformer  as the number of iteration increases. Simulation results were demonstrated to verify our theoretical findings.

\appendices

\section{Minor results}
\begin{lemma}\label{lem:log}
        For all $a,b>0$, it holds that 
        $| \log(1+a) -\log(1+b) | \leq |a-b |.$
\end{lemma}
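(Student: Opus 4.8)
The plan is to reduce the inequality to a one-line statement about the derivative of $f(x) = \log(1+x)$ and invoke the Mean Value Theorem. First I would record that $f$ is differentiable on $(-1,\infty)$, hence in particular on $(0,\infty)$, with $f'(x) = \frac{1}{1+x}$. The only analytic input the argument needs is the observation that for $x>0$ we have $1+x>1$, so $0 < f'(x) < 1$; that is, $\log(1+\cdot)$ is $1$-Lipschitz on $[0,\infty)$ precisely because its slope never exceeds one there.

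Next, for arbitrary $a,b>0$ I would dispose of the trivial case $a=b$ (both sides vanish) and otherwise apply the Mean Value Theorem to $f$ on the closed interval with endpoints $a$ and $b$. This produces a point $c$ strictly between $a$ and $b$ with $\log(1+a) - \log(1+b) = f'(c)(a-b) = \frac{a-b}{1+c}$. Because $c$ lies between two positive numbers we have $c>0$, so $\frac{1}{1+c}<1$, and therefore $|\log(1+a)-\log(1+b)| = \frac{|a-b|}{1+c} \leq |a-b|$, which is exactly the claimed bound.

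As an alternative route that sidesteps the MVT, I could instead write the left-hand side directly as the integral $\int_{b}^{a} \frac{dt}{1+t}$ and bound the integrand by $\frac{1}{1+t} \leq 1$ over the (positive) range of integration, arriving at the same estimate. I do not expect any genuine obstacle here, since the whole content of the lemma is the Lipschitz bound for $\log(1+\cdot)$ on the nonnegative reals; the only point requiring a moment of care is ensuring the intermediate value $c$ (or the integration variable $t$) stays positive so that the derivative bound $f'<1$ actually applies, which is guaranteed by the hypothesis $a,b>0$.
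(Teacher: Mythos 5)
Your proposal is correct and uses exactly the same argument as the paper: apply the Mean Value Theorem to $f(x)=\log(1+x)$ and observe that $|f'(\xi)| = \frac{1}{1+\xi} < 1$ for $\xi$ between the positive numbers $a$ and $b$. Your additional care (disposing of the $a=b$ case explicitly, and the alternative integral bound) only elaborates on the paper's one-line proof without changing its substance.
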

\begin{proof}
Let $f(x)= \log(1+x)$ and $a,b>0$ with $a\neq b$.
By the Mean Value Theorem, there exists $\xi \in (a,b)$ such that
\begin{equation*}
\bigg| \frac{f(a)-f(b)}{a-b} \bigg| = | f'(\xi) | <1.
\end{equation*}
\end{proof}

\begin{lemma}\label{lem:inequality1}
        Let $a,b>0$ be given. Then it holds that 
        \begin{equation}
        \Big|    \frac{y}{x+b} - \frac{a}{b}  \Big|  \leq   \frac{1}{|b|} |y-a| + \frac{a}{b^2} |x|,
        \end{equation} 
        for all $x,y\geq 0$.
\end{lemma}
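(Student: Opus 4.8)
The plan is to write the difference as a single fraction, split the resulting numerator into a piece proportional to $y-a$ and a piece proportional to $x$, apply the triangle inequality, and then exploit the elementary bound $x+b \geq b$ that follows from the hypotheses.

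First I would place both terms over the common denominator $b(x+b)$, which gives
\[
\frac{y}{x+b} - \frac{a}{b} = \frac{b(y-a) - ax}{b(x+b)}.
\]
Taking absolute values and applying the triangle inequality to the numerator then yields
\[
\left| \frac{y}{x+b} - \frac{a}{b} \right| \leq \frac{|b|\,|y-a| + a\,|x|}{|b|\,|x+b|}.
\]

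The key observation is that, since $x \geq 0$ and $b > 0$, one has $x+b \geq b = |b| > 0$, and hence $\tfrac{1}{|x+b|} \leq \tfrac{1}{|b|}$. Splitting the right-hand side into its two natural fractions and applying this estimate to each separately gives
\[
\frac{|b|\,|y-a|}{|b|\,|x+b|} \leq \frac{|y-a|}{|b|}, \qquad \frac{a\,|x|}{|b|\,|x+b|} \leq \frac{a\,|x|}{b^2},
\]
and summing these two bounds produces exactly the claimed inequality $\tfrac{1}{|b|}|y-a| + \tfrac{a}{b^2}|x|$.

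I do not anticipate any genuine obstacle here: the argument is a short chain of algebra and the triangle inequality. The only point requiring care is the positivity bookkeeping — using the hypotheses $x \geq 0$ and $b > 0$ to justify the denominator estimate $|x+b| \geq |b|$ without any absolute-value subtleties, and using $a > 0$ to keep the cross term $a\,|x|$ with the correct sign when the numerator is bounded.
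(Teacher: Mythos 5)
Your proposal is correct and follows essentially the same route as the paper's proof: combine the two terms over the common denominator $b(x+b)$, apply the triangle inequality to the numerator $b(y-a)-ax$, and use $x+b \geq b > 0$ to bound the denominator by $|b|^2$. Your write-up is slightly more explicit about the denominator estimate and the positivity bookkeeping, but the decomposition and key steps are identical.
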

\begin{proof}
By the triangle inequality, it holds that 
\begin{align*}
\Big|    \frac{y}{x+b} - \frac{a}{b}  \Big|  
=     \frac{| b(y-a)-ax| }{|x+b| |b|}  
 \leq   \frac{| b(y-a)| + |ax| }{|b|^2}
\end{align*} 
for all $x,y \geq 0$.
\end{proof}

\begin{lemma}\label{lem:phi1}
        For each $k,j$ it holds that
        \begin{align*}
        & \Big| \phi_{kj}(\widehat{\bm{W}}) -\phi_{kj}(\widetilde{\bm{W}}) \Big| \\
        & \leq \big\| \widehat{\bm{w}}_j - \widetilde{\bm{w}}_j  \big\|_2   \big\| \bm{h}_k \big\|_2^2   \big(  \big\| \widehat{\bm{w}}_j -  \widetilde{\bm{w}}_j  \big\|_2 + 2\big\| \widetilde{\bm{w}}_j \big\|_2   \big).
        \end{align*}
\begin{proof}
                Using the fact $ |\bm{h}_k^{\sf H} \bm{w}_j|^2 =  \bm{w}_j^{\sf H} \bm{h}_k \bm{h}_k^{\sf H} \bm{w}_j$,
                by the triangle inequality, we have that
                \begin{align*}
                & \Big| \phi_{kj}(\widehat{\bm{W}}) -\phi_{kj}(\widetilde{\bm{W}}) \Big| \\
                &=  \big|   \widehat{\bm{w}}^{\sf H}_j \bm{h}_k\bm{h}_k^{\sf H} \widehat{\bm{w}}_j    -     \widetilde{\bm{w}}^{\sf H}_j \bm{h}_k\bm{h}_k^{\sf H}  \widetilde{\bm{w}}_j   \big| \\
                &=  \big|   \widehat{\bm{w}}^{\sf H}_j \bm{h}_k\bm{h}_k^{\sf H} (\widehat{\bm{w}}_j - \widetilde{\bm{w}}_j  )
                + (\widehat{\bm{w}}^{\sf H}_j -  \widetilde{\bm{w}}^{\sf H}_j)  \bm{h}_k\bm{h}_k^{\sf H}  \widetilde{\bm{w}}_j   \big| \\
                &\leq  \big|   \widehat{\bm{w}}^{\sf H}_j \bm{h}_k\bm{h}_k^{\sf H} (\widehat{\bm{w}}_j - \widetilde{\bm{w}}_j  )  \big| 
                +  \big|  (\widehat{\bm{w}}^{\sf H}_j -  \widetilde{\bm{w}}^{\sf H}_j)  \bm{h}_k\bm{h}_k^{\sf H}   \widetilde{\bm{w}}_j \big| \\
                & \leq  
                \big\| \widehat{\bm{w}}_j -  \widetilde{\bm{w}}_j  \big\|_2   \big\|  \bm{h}_k\bm{h}_k^{\sf H}  \big\|_2   \big( \big\|\widehat{\bm{w}}_j \big\|_2 + \big\|  \widetilde{\bm{w}}_j  \big\|_2  \big). \\
                \end{align*}
By the triangle inequality and the definition of operator norm,
it holds that
\begin{align*}
& \big\| \widehat{\bm{w}}_j - \bm{w}_j  \big\|_2   \big\|  \bm{h}_k\bm{h}_k^{\sf H}  \big\|_2   \big( \big\|\widehat{\bm{w}}_j \big\|_2 + \big\| \bm{w}_j  \big\|_2  \Big) \\
& \leq \big\| \widehat{\bm{w}}_j - \bm{w}_j  \big\|_2   \big\|  \bm{h}_k\bm{h}_k^{\sf H}  \big\|_2   \big(  \big\| \widehat{\bm{w}}_j - \bm{w}_j   \big\|_2 + 2\big\| \bm{w}_j \big\|_2   \big) \\
&  \leq \big\| \widehat{\bm{w}}_j - \bm{w}_j  \big\|_2   \big\| \bm{h}_k \big\|^2_2   \big(  \big\| \widehat{\bm{w}}_j -  \bm{w}_j  \big\|_2 + 2\big\| \bm{w}_j \big\|_2   \big).
        \end{align*}
                \end{proof}
\end{lemma}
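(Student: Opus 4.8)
The plan is to exploit that $\phi_{kj}$ is a Hermitian quadratic form in the single column $\bm{w}_j$, so that the difference $\phi_{kj}(\widehat{\bm{W}}) - \phi_{kj}(\widetilde{\bm{W}})$ can be linearized in the perturbation $\widehat{\bm{w}}_j - \widetilde{\bm{w}}_j$. First I would rewrite $\phi_{kj}(\bm{W}) = |\bm{h}_k^{\sf H} \bm{w}_j|^2 = \bm{w}_j^{\sf H} \bm{h}_k \bm{h}_k^{\sf H} \bm{w}_j$, introducing the rank-one Hermitian matrix $\bm{A} := \bm{h}_k \bm{h}_k^{\sf H}$. The entire statement then becomes a perturbation bound for the quadratic form $\bm{w}_j \mapsto \bm{w}_j^{\sf H} \bm{A} \bm{w}_j$, which removes all the downstream structure and isolates the essential estimate.

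The key step is a telescoping identity that splits the difference symmetrically by inserting and cancelling the cross term $\widehat{\bm{w}}_j^{\sf H} \bm{A} \widetilde{\bm{w}}_j$:
\[
\widehat{\bm{w}}_j^{\sf H} \bm{A} \widehat{\bm{w}}_j - \widetilde{\bm{w}}_j^{\sf H} \bm{A} \widetilde{\bm{w}}_j = \widehat{\bm{w}}_j^{\sf H} \bm{A} (\widehat{\bm{w}}_j - \widetilde{\bm{w}}_j) + (\widehat{\bm{w}}_j - \widetilde{\bm{w}}_j)^{\sf H} \bm{A} \widetilde{\bm{w}}_j.
\]
Applying the triangle inequality to the two summands, followed by the bilinear bound $|\bm{x}^{\sf H} \bm{A} \bm{y}| \leq \|\bm{x}\|_2 \|\bm{A}\|_2 \|\bm{y}\|_2$ (a direct consequence of Cauchy--Schwarz together with the definition of the operator norm), I expect to obtain
\[
\big| \phi_{kj}(\widehat{\bm{W}}) - \phi_{kj}(\widetilde{\bm{W}}) \big| \leq \big\|\widehat{\bm{w}}_j - \widetilde{\bm{w}}_j \big\|_2 \, \|\bm{A}\|_2 \, \big( \|\widehat{\bm{w}}_j\|_2 + \|\widetilde{\bm{w}}_j\|_2 \big).
\]

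To reach the stated form I would then use two elementary facts: that $\|\bm{A}\|_2 = \|\bm{h}_k \bm{h}_k^{\sf H}\|_2 = \|\bm{h}_k\|_2^2$, since the operator norm of a rank-one outer product equals the product of the vector norms; and that $\|\widehat{\bm{w}}_j\|_2 \leq \|\widehat{\bm{w}}_j - \widetilde{\bm{w}}_j\|_2 + \|\widetilde{\bm{w}}_j\|_2$ by the triangle inequality, whence $\|\widehat{\bm{w}}_j\|_2 + \|\widetilde{\bm{w}}_j\|_2 \leq \|\widehat{\bm{w}}_j - \widetilde{\bm{w}}_j\|_2 + 2\|\widetilde{\bm{w}}_j\|_2$. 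Substituting both bounds yields exactly the claimed inequality. There is no genuine obstacle here: the entire content lies in the telescoping decomposition, and the only points requiring care are keeping the conjugate-transpose bookkeeping correct in the bilinear terms and justifying the rank-one operator-norm identity. The appearance of $\|\widetilde{\bm{w}}_j\|_2$ rather than $\|\widehat{\bm{w}}_j\|_2$ in the final bound simply reflects the choice to measure everything relative to the reference point $\widetilde{\bm{w}}_j$.
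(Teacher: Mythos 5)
Your proposal is correct and follows essentially the same route as the paper's own proof: the identical telescoping decomposition $\widehat{\bm{w}}_j^{\sf H}\bm{A}\widehat{\bm{w}}_j - \widetilde{\bm{w}}_j^{\sf H}\bm{A}\widetilde{\bm{w}}_j = \widehat{\bm{w}}_j^{\sf H}\bm{A}(\widehat{\bm{w}}_j-\widetilde{\bm{w}}_j) + (\widehat{\bm{w}}_j-\widetilde{\bm{w}}_j)^{\sf H}\bm{A}\widetilde{\bm{w}}_j$ with $\bm{A}=\bm{h}_k\bm{h}_k^{\sf H}$, followed by the triangle inequality, the operator-norm bound on the bilinear terms, the rank-one identity $\|\bm{h}_k\bm{h}_k^{\sf H}\|_2=\|\bm{h}_k\|_2^2$, and the final triangle step $\|\widehat{\bm{w}}_j\|_2+\|\widetilde{\bm{w}}_j\|_2 \leq \|\widehat{\bm{w}}_j-\widetilde{\bm{w}}_j\|_2+2\|\widetilde{\bm{w}}_j\|_2$. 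No gaps; if anything, you make explicit the rank-one norm identity that the paper uses tacitly.
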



\section{PROOF OF THEOREM \ref{thm:approx-R}}\label{appendix_B}

%


Since ${\sf{SINR}}_k>0$ for all $k$, the triangle inequality and Lemma \ref{lem:log}
imply that
$$\big| R\big( \widehat{\bm{W}} \big) - R\big( \bm{W}^* \big)  \big|
\leq  \sum_{k=1}^{K} \Big| {\sf{SINR}}_k( \widehat{\bm{W}}  ) - {\sf{SINR}}_k(\bm{W}^*)  \Big|.$$
Then by Lemma \ref{lem:inequality1} it follows that 
\begin{align*}
& \sum_{k=1}^{K} \Big| {\sf{SINR}}_k( \widehat{\bm{W}}  ) - {\sf{SINR}}_k(\bm{W}^*)  \Big|\\
&=\sum_{k=1}^{K}   \Bigg| \frac{ \phi_{kk}(\widehat{\bm{W}})}{\sum_{j\ne k}  \phi_{kj}(\widehat{\bm{W}})+\sigma^2 } - \frac{ \phi_{kk}(\bm{W}^*)}{\sum_{j\ne k}  \phi_{kj}(\bm{W}^*)+\sigma^2 }  \Bigg| \\
& \leq  \sum_{k=1}^{K} \Bigg[  \frac{   1       }{\sum_{j\ne k}  \phi_{kj}(\bm{W}^*)+\sigma^2 }   \Big| \phi_{kk}(\widehat{\bm{W}}) -\phi_{kk}(\bm{W}^*) \Big|        \\                                      
& \quad + \frac{\phi_{kk}(\bm{W}^*)}{(\sum_{j\ne k}  \phi_{kj}(\bm{W}^*)+\sigma^2)^2 }  \bigg|  \sum_{j\ne k}  \phi_{kj}(\widehat{\bm{W}}) - \sum_{j\ne k}  \phi_{kj}(\bm{W}^*) \bigg| \Bigg] \\
& \leq \frac{C}{2} \sum_{k=1}^{K} \bigg[   \big| \phi_{kk}(\widehat{\bm{W}}) -\phi_{kk}(\bm{W}^*) \big|                                              
+ \sum_{j\ne k}   \big|  \phi_{kj}(\widehat{\bm{W}}) -  \phi_{kj}(\bm{W}^*) \big| \bigg] \\
& \leq   C \|  \bm{H} \|_F^2   \sum_{k=1}^{K} \Big( \big\| \widehat{\bm{w}}_k - \bm{w}^*_{k}  \big\|_2^2     +2  \big\|  \widehat{\bm{w}}_k -  \bm{w}_k^* \big\|_2 \big\| \bm{w}_k^* \big\|_2   \Big) \\
& \leq     C  \| \bm{H} \|_F^2     \Big( \big\|\widehat{\bm{W}} - \bm{W}^* \big\|_F^2     + 2\big\|\widehat{\bm{W}} - \bm{W}^* \big\|_F    \big\| \bm{W}^* \big\|_F  \Big),
\end{align*}
where  
$
C =  2\underset{k}{\max}\Bigg\{  \frac{         1       }{\sum_{j\ne k}  \phi_{kj}(\bm{W}^*)+\sigma^2 },  \frac{\phi_{kk}(\bm{W}^*)}{(\sum_{j\ne k}  \phi_{kj}(\bm{W}^*)+\sigma^2)^2 } \Bigg\}.
$
Note that the first inequality holds from Lemma \ref{lem:inequality1}.
The last second inequality holds from the following by Lemma \ref{lem:phi1}. 
{\small
\begin{align*}
& \sum_{k=1}^{K} \sum_{j\ne k}  \big|  \phi_{kj}(\widehat{\bm{W}}) - \phi_{kj}(\bm{W}^*) \big| \\
&\qquad  \leq  \sum_{j=1}^K   \sum_{k=1}^{K}  \big\| \widehat{\bm{w}}_j - \bm{w}_j^*  \big\|_2   \big\| \bm{h}_k \big\|_2^2   \big(  \big\| \widehat{\bm{w}}_j -  \bm{w}_j^*  \big\|_2 + 2\big\| \bm{w}_j^* \big\|_2   \big)\\ 
&\qquad  \leq   \big\| \bm{H} \big\|_F^2 \sum_{j=1}^K  \Big( \big\| \widehat{\bm{w}}_j - \bm{w}^*_{j}  \big\|_2^2     +2  \big\|  \widehat{\bm{w}}_j -  \bm{w}_j^* \big\|_2 \big\| \bm{w}_j^* \big\|_2   \Big), \\
& \sum_{k=1}^{K} \Big| \phi_{kk}(\widehat{\bm{W}}) -\phi_{kk}(\bm{W}^*) \Big|  \\
& \qquad \leq   \sum_{k=1}^{K} \big\| \widehat{\bm{w}}_k - \bm{w}_k^*  \big\|_2   \big\| \bm{h}_k \big\|_2^2   \big(  \big\| \widehat{\bm{w}}_k -  \bm{w}_k^*  \big\|_2 + 2\big\| \bm{w}_k^* \big\|_2   \big)\\
& \qquad  \leq   \sum_{k=1}^{K}  \big\| \widehat{\bm{w}}_k - \bm{w}^*_{k}  \big\|_2   \big\| \bm{H} \big\|_F^2   \big(   \big\|  \widehat{\bm{w}}_k -  \bm{w}_k^*  \big\|_2 + 2\big\| \bm{w}_k^* \big\|_2   \big)\\
& \qquad \leq  \|  \bm{H} \|_F^2 \sum_{k=1}^{K} \Big( \big\| \widehat{\bm{w}}_k - \bm{w}^*_{k}  \big\|_2^2     +2  \big\|  \widehat{\bm{w}}_k -  \bm{w}_k^* \big\|_2 \big\| \bm{w}_k^* \big\|_2   \Big).
\end{align*}
}
The last inequality holds from the Cauchy-Schwartz inequality as follows:
\begin{align*}
&\sum_{k=1}^{K}  \big\|  \widehat{\bm{w}}_k -  \bm{w}_k^* \big\|_2 \big\| \bm{w}_k^* \big\|_2  \\
 &\leq \Bigg( \sum_{k=1}^{K}  \big\|  \widehat{\bm{w}}_k -  \bm{w}_k^* \big\|_2^2  \Bigg)^{\frac{1}{2}}   
\Bigg( \sum_{k=1}^{K}  \big\| \bm{w}_k^* \big\|_2^2   \Bigg)^{\frac{1}{2}}\\
& \leq   \big\|\widehat{\bm{W}} - \bm{W}^* \big\|_F    \big\| \bm{W}^* \big\|_F.
\end{align*}

\vspace{1cm}

\bibliographystyle{ieeetr}
\bibliography{refs}

\begin{thebibliography}{10}

\bibitem{YM_GC19}
H.~Choi, T.~Jiang, W.~Li, and Y.~Shi, ``Randomized sketching based beamforming
  for massive {MIMO},'' in {\em Proc. IEEE Global Commun. Conf. (Globecom)},
  Waikoloa, Hawaii, USA, Dec. 2019.

\bibitem{shi2017generalized}
Y.~Shi, J.~Zhang, W.~Chen, and K.~B. Letaief, ``Generalized sparse and low-rank
  optimization for ultra-dense networks,'' {\em IEEE Commun. Mag.}, vol.~56,
  no.~6, pp.~42--48, 2018.

\bibitem{rusek2013scaling}
F.~Rusek, D.~Persson, B.~K. Lau, E.~G. Larsson, T.~L. Marzetta, O.~Edfors, and
  F.~Tufvesson, ``Scaling up {MIMO}: Opportunities and challenges with very
  large arrays,'' {\em IEEE Signal Process. Mag.}, vol.~30, no.~1, pp.~40--60,
  2013.

\bibitem{DBLP:journals/corr/abs-1902-07678}
E.~Bj{\"{o}}rnson, L.~Sanguinetti, H.~Wymeersch, J.~Hoydis, and T.~L. Marzetta,
  ``Massive {MIMO} is a reality---what is next?: {F}ive promising research
  directions for antenna arrays,'' {\em Digital Signal Process.}, vol.~94,
  pp.~3--20, 2019.

\bibitem{lu2014overview}
L.~Lu, G.~Y. Li, A.~L. Swindlehurst, A.~Ashikhmin, and R.~Zhang, ``An overview
  of massive {MIMO}: Benefits and challenges,'' {\em IEEE J. Sel. Topics Signal
  Process.}, vol.~8, no.~5, pp.~742--758, 2014.

\bibitem{hoydis2013massive}
J.~Hoydis, S.~Ten~Brink, and M.~Debbah, ``Massive {MIMO} in the {UL}/{DL} of
  cellular networks: How many antennas do we need?,'' {\em IEEE J. Sel. Areas
  Commun.}, vol.~31, no.~2, pp.~160--171, 2013.

\bibitem{Couillent_TIT12}
S.~Wagner, R.~Couillet, M.~Debbah, and D.~T.~M. Slock, ``Large system analysis
  of linear precoding in correlated {MISO} broadcast channels under limited
  feedback,'' {\em IEEE Trans. Inf. Theory}, vol.~58, pp.~4509--4537, July
  2012.

\bibitem{Bjornson_SPM14}
E.~Bj{\"o}rnson, M.~Bengtsson, and B.~Ottersten, ``Optimal multiuser transmit
  beamforming: A difficult problem with a simple solution structure [lecture
  notes],'' {\em IEEE Signal Process. Mag.}, vol.~31, pp.~142--148, Jul. 2014.

\bibitem{bjornson2013optimal}
E.~Bj{\"o}rnson, E.~Jorswieck, {\em et~al.}, ``Optimal resource allocation in
  coordinated multi-cell systems,'' {\em Found. Trends Commun. Inform. Theory},
  vol.~9, no.~2--3, pp.~113--381, 2013.

\bibitem{zakhour2012base}
R.~Zakhour and S.~V. Hanly, ``Base station cooperation on the downlink: {Large}
  system analysis,'' {\em IEEE Trans. Inf. Theory}, vol.~4, no.~58,
  pp.~2079--2106, 2012.

\bibitem{shi2015large}
Y.~Shi, J.~Zhang, B.~O'Donoghue, and K.~B. Letaief, ``Large-scale convex
  optimization for dense wireless cooperative networks,'' {\em IEEE Trans.
  Signal Process.}, vol.~63, no.~18, pp.~4729--4743, 2015.

\bibitem{peel2005vector}
C.~B. Peel, B.~M. Hochwald, and A.~L. Swindlehurst, ``A vector-perturbation
  technique for near-capacity multiantenna multiuser communication-part i:
  channel inversion and regularization,'' {\em IEEE Trans. Commun.}, vol.~53,
  no.~1, pp.~195--202, 2005.

\bibitem{mueller2016linear}
A.~Mueller, A.~Kammoun, E.~Bj{\"o}rnson, and M.~Debbah, ``Linear precoding
  based on polynomial expansion: Reducing complexity in massive mimo,'' {\em
  EURASIP J. Wireless Commun. Netw.}, vol.~2016, no.~1, p.~63, 2016.

\bibitem{8602444}
A.~{Benzin}, G.~{Caire}, Y.~{Shadmi}, and A.~M. {Tulino}, ``Low-complexity
  truncated polynomial expansion {DL} precoders and {UL} receivers for massive
  {MIMO} in correlated channels,'' {\em IEEE Trans. Wireless Commun.}, vol.~18,
  pp.~1069--1084, Feb. 2019.

\bibitem{Tropp11}
N.~Halko, P.~Martinsson, and J.~Tropp, ``Finding structure with randomness:
  Probabilistic algorithms for constructing approximate matrix
  decompositions,'' {\em SIAM Rev.}, vol.~53, no.~2, pp.~217 -- 288, 2011.

\bibitem{TCS-060}
D.~P. Woodruff, ``Sketching as a tool for numerical linear algebra,'' {\em
  Found. and Trends in Theoretical Comput. Sci}, vol.~10, no.~1--2, pp.~1--157,
  2014.

\bibitem{Drineas:2016:RRN:2942427.2842602}
P.~Drineas and M.~W. Mahoney, ``Randnla: Randomized numerical linear algebra,''
  {\em Commun. ACM}, vol.~59, pp.~80--90, May 2016.

\bibitem{JMLR:v18:17-313}
S.~Wang, A.~Gittens, and M.~W. Mahoney, ``Sketched ridge regression:
  Optimization perspective, statistical perspective, and model averaging,''
  {\em J. Mach. Learn. Res.}, vol.~18, no.~218, pp.~1--50, 2018.

\bibitem{MR3662446}
Y.~Yang, M.~Pilanci, and M.~J. Wainwright, ``Randomized sketches for kernels:
  fast and optimal nonparametric regression,'' {\em Ann. Statist.}, vol.~45,
  no.~3, pp.~991--1023, 2017.

\bibitem{chowdhury2018iterative}
A.~Chowdhury, J.~Yang, and P.~Drineas, ``An iterative, sketching-based
  framework for ridge regression,'' in {\em Proc. Int. Conf. Mach. Learn.
  (ICML)}, pp.~988--997, Jul. 2018.

\bibitem{pmlr-v80-lin18b}
J.~Lin and V.~Cevher, ``Optimal rates of sketched-regularized algorithms for
  least-squares regression over {H}ilbert spaces,'' in {\em Proc. Int. Conf.
  Mach. Learn. (ICML)}, pp.~3102--3111, Jul. 2018.

\bibitem{precondition}
A.~V.~a. Alfio~Quarteroni, {\em Numerical approximation of partial differential
  equations}.
\newblock Springer Series in Computational Mathematics 23, Springer-Verlag
  Berlin Heidelberg, 1994.

\bibitem{8169014}
N.~Fatema, G.~Hua, Y.~Xiang, D.~Peng, and I.~Natgunanathan, ``Massive {MIMO}
  linear precoding: A survey,'' {\em IEEE Systems Journal}, vol.~12,
  pp.~3920--3931, Dec 2018.

\bibitem{Bhojanapalli2014CoherentMC}
S.~Bhojanapalli, Y.~Chen, S.~Sanghavi, and R.~Ward, ``Coherent matrix
  completion,'' in {\em Proc. Int. Conf. Mach. Learn. (ICML)}, 2014.

\bibitem{Bi:2013:EMC:3042817.3042982}
W.~Bi and J.~T. Kwok, ``Efficient multi-label classification with many
  labels,'' in {\em Proc. Int. Conf. Mach. Learn. (ICML)}, pp.~405--413, 2013.

\bibitem{Saunders98ridgeregression}
C.~Saunders, A.~Gammerman, and V.~Vovk, ``Ridge regression learning algorithm
  in dual variables,'' in {\em Proc. Int. Conf. Mach. Learn. (ICML)},
  pp.~515--521, Morgan Kaufmann, 1998.

\bibitem{Dijkstra2014}
T.~K. Dijkstra, ``Ridge regression and its degrees of freedom,'' {\em Quality
  {\&} Quantity}, vol.~48, pp.~3185--3193, Nov 2014.

\bibitem{Drineas:2006:SAL:1109557.1109682}
P.~Drineas, M.~W. Mahoney, and S.~Muthukrishnan, ``Sampling algorithms for l2
  regression and applications,'' in {\em Proc. ACM-SIAM Symp. on Discrete
  Algorithm}, pp.~1127--1136, 2006.

\bibitem{Cohen:2015:USM:2688073.2688113}
M.~B. Cohen, Y.~T. Lee, C.~Musco, C.~Musco, R.~Peng, and A.~Sidford, ``Uniform
  sampling for matrix approximation,'' in {\em Proc. Conf. on Innovations in
  Theoretical Comput. Sci.}, pp.~181--190, 2015.

\bibitem{4031351}
T.~Sarlos, ``Improved approximation algorithms for large matrices via random
  projections,'' in {\em Proc. IEEE Symp. on Found. of Comput. Sci. (FOCS)},
  pp.~143--152, Oct. 2006.

\bibitem{Alaoui:2015:FRK:2969239.2969326}
A.~E. Alaoui and M.~W. Mahoney, ``Fast randomized kernel ridge regression with
  statistical guarantees,'' in {\em Proc. Neural Inf. Process. Syst. (NIPS)},
  pp.~775--783, 2015.

\bibitem{DBLP:journals/corr/KapralovLMMS14}
M.~Kapralov, Y.~T. Lee, C.~Musco, C.~Musco, and A.~Sidford, ``Single pass
  spectral sparsification in dynamic streams,'' {\em SIAM J. on Comput.,},
  vol.~46, no.~1, pp.~456--477, 2017.

\bibitem{cohen_et_al:LIPIcs:2016:6278}
M.~B. Cohen, J.~Nelson, and D.~P. Woodruff, ``{Optimal Approximate Matrix
  Product in Terms of Stable Rank},'' in {\em Proc. Int. Colloq. on Automata,
  Languages, and Program.( ICALP)}, vol.~55, pp.~1--14, 2016.

\bibitem{DBLP:conf/soda/CohenMM17}
M.~B. Cohen, C.~Musco, and C.~Musco, ``Input sparsity time low-rank
  approximation via ridge leverage score sampling,'' in {\em Proc. of the 28th
  Annual ACM-SIAM Symposium on Discrete Algorithms (SODA)}, pp.~1758--1777,
  2017.

\bibitem{shi2014group}
Y.~Shi, J.~Zhang, and K.~B. Letaief, ``Group sparse beamforming for green
  cloud-ran,'' {\em IEEE Trans. Wireless Commun.}, vol.~13, no.~5,
  pp.~2809--2823, 2014.

\end{thebibliography}

\end{document}